\newtheorem{theo}{Theorem}[section]
\newtheorem{defi}[theo]{Definition}
\newtheorem{prop}[theo]{Proposition}
\newtheorem{lemma}[theo]{Lemma}
\newtheorem{conj}{Conjecture}
\newtheorem{post}{Postulate}
\DeclareMathOperator{\MWE}{MWE}
\DeclareMathOperator{\stor}{stor}
\newcommand{\Z}{\mathbb{Z}}
\newcommand{\N}{\mathbb{N}}
\newcommand{\Esp}{{\mathbb E}}
\newcommand{\Prob}{{\mathbb P}}
\author[1]{Nicolas Delfosse}
\author[2]{Ben W. Reichardt}
\author[1]{Krysta M. Svore}
\affil[1]{\small Microsoft Quantum and Microsoft Research, Redmond, WA, USA}
\affil[2]{\small University of Southern California, Los Angeles, CA, USA}
\title{Beyond single-shot fault-tolerant quantum error correction}
\begin{document}

\maketitle

\begin{abstract}

Extensive quantum error correction is necessary in order to perform a useful computation on
a noisy quantum computer. 
Moreover, quantum error correction must be implemented based on imperfect 
parity check measurements that may return incorrect outcomes or inject 
additional faults into the qubits.
To achieve fault-tolerant error correction, Shor proposed to repeat the sequence
of parity check measurements until the same outcome is observed sufficiently 
many times. Then, one can use this information to perform error correction.
A basic implementation of this fault tolerance strategy requires $\Omega(r d^2)$ 
parity check measurements for a distance-$d$ code defined by $r$ parity checks.
%whereas only $r$ measurements are sufficient in the fault-free case.
For some specific highly structured quantum codes, Bombin has shown 
that single-shot fault-tolerant quantum error correction is possible using only $r$ measurements.
In this work, we demonstrate that fault-tolerant quantum error correction 
can be achieved using $O(d \log(d))$ measurements for any code
with distance $d \geq \Omega(n^\alpha)$ for some constant $\alpha > 0$.
Moreover, we prove the existence of a sub-single-shot 
fault-tolerant quantum error correction scheme using fewer than $r$ measurements.
In some cases, the number of parity check measurements required for fault-tolerant quantum
error correction is exponentially smaller than the number of parity checks 
defining the code.
\end{abstract}

As memory scales to higher density, error rates rise and new sources of error emerge, 
requiring extensive error correction.
Ultimately, at the quantum scale, any manipulation of a quantum system introduces
an error with a non-negligible probability.
In this work, we consider the problem of error correction with a faulty quantum device.
The presence of faults in parity check measurements significantly increases the cost of
quantum error correction in comparison with the perfect measurement case.
Our main goal is to reduce the time overhead of fault tolerance.

\medskip
Error correction with linear codes is based on the evaluation of parity checks 
which provide the {\em syndrome} that is then used to identify the possible error.
Faults may occur during the syndrome measurement resulting either in 
incorrect syndrome values or in additional errors injected in the data.
%Moreover, a general quantum state cannot be copied.
Linear codes can be used in the quantum setting thanks to the CSS construction
\cite{calderbank1996:css, steane1996:css} and the stabilizer 
formalism \cite{gottesman1997:stabilizer}.
In the present work, we do not consider the technical details of these 
constructions. We incorporate the quantum constraints as follows.

\medskip
\begin{post} [Quantum parity check constraint] \label{postulate}
If a linear code with parity check matrix $H$ is used for quantum error correction, 
the only measurements available are the parity checks that are linear combinations 
of the rows of $H$.
\end{post}

\medskip
This constraint is satisfied for all measurement schemes considered below.
Our motivation for focusing on this unique quantum constraint is two-fold. 
First, we want to emphasize the aspects of quantum fault tolerance 
that are of purely classical nature and that deserve classical solutions.
Second, we hope to make this work and the mathematical questions it raises 
accessible to a broader audience.

\medskip
Shor developed the first quantum error-correcting code in 1995 \cite{shor1995:qec}.
However, the presence of faults makes error correction 
challenging to implement in a quantum device since measurement outcomes 
cannot be trusted as Fig.~\ref{fig:repetition_code} shows.
The following year, Shor introduced a fault-tolerant mechanism in order to perform
quantum error correction with faulty components \cite{shor1996:ftqec}.
This line of work led to the threshold theorem \cite{aharonov2008:threshold} that demonstrates 
that an arbitrary long quantum computation can be performed over a fault quantum device at 
the price of a reasonable asymptotic overhead if the noise strength is below a certain threshold value.
An elegant proof of this result based on a notion of fault-tolerant computation for 
concatenated codes was proposed later \cite{aliferis2005:threshold, gottesman2010:threshold}.
Although asymptotically reasonable, the overhead required for fault tolerance is
daunting in the regime of practical applications \cite{fowler2012:surface, reiher2017:nitrogen}.

\begin{figure}
\centering
\includegraphics[scale=.7]{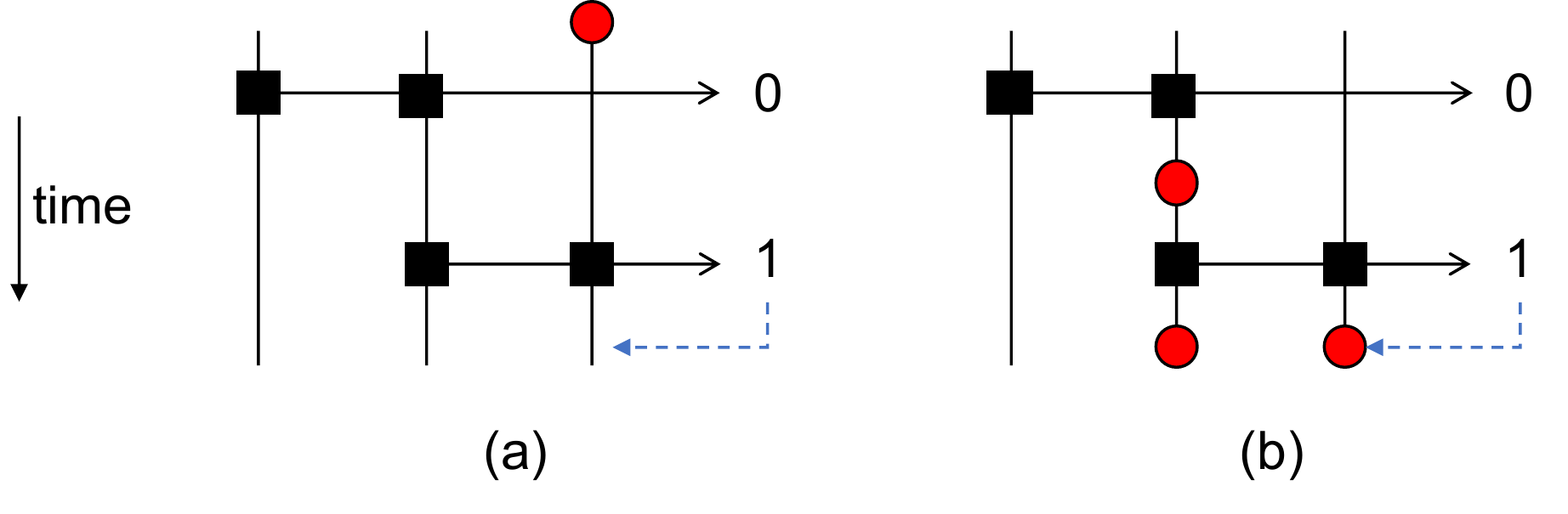}
\caption{
Error correction with the three-bit repetition code.
Errors affecting a codeword $(x_1, x_2, x_3)$ are corrected based on the
measurement of the two parity checks $x_1 + x_2 \mod 2$ and $x_2 + x_3 \mod 2$.
(a) An input bit flip on the third bit (red circle) is corrected based on the 
outcome $(0, 1)$ for the parity checks.
(b) The red circles indicates the presence of a fault after the first measurement that flips the
second bit. The outcome $(0, 1)$ obtained suggests an error on the third bit
resulting in two bit flips after correction.
This protocol must be made fault-tolerant in order to avoid the confusion of internal faults
with errors on the input codewords.
}
\label{fig:repetition_code}
\end{figure}

\medskip
It seems natural to consider longer measurement sequences in order to distinguish
between internal faults and input errors. However, this intuition is at tension 
with the fact that number of possible fault locations increases rapidly 
with the number of measurements.
A sequence of $m$ measurements for a code with length $n$ provides only $m$
outcome bits for about $mn$ fault locations. 
The number of possible fault combinations ($2^{n m}$) 
grows exponentially faster than the number of outcome vectors ($2^m$) 
when $n, m \rightarrow \infty$.
Luckily, a number of fault configurations can be considered equivalent given that they
have the same effect on codewords, making fault tolerance possible.

\medskip
The basic idea of Shor's fault-tolerant scheme is to repeat the syndrome measurement 
until we observe the same syndrome vector enough times on consecutive 
measurements.
We can then rely on the syndrome value and correct errors accordingly.
For a linear code encoding $k$ bits into $n$ bits with minimum distance $d$, 
the Shor scheme requires up to $((d+1)/2)^2$ repetitions of the syndrome measurement,
that is $(n-k)((d+1)/2)^2 \geq \Omega(d^3)$ parity check measurements.
This is because $(d+1)/2$ consecutive identical syndrome vectors are necessary 
to guarantee a correct syndrome value in the presence of up to $(d-1)/2$ faults.
This large time overhead is also present in other fault-tolerant quantum error
correction schemes.
Flag error correction \cite{chao2018:flagec, chamberland2018:flagec} 
considerably reduces the number of ancilla qubits but still leads to
a similar time overhead.
Steane method  \cite{steane1999:ftqec} implements the syndrome readout 
in constant depth, however the difficulty is transferred to the fault-tolerant 
preparation of an ancilla state.

\medskip
In the present work, Theorem~\ref{theo:time_overhead} proves 
that fault-tolerant error correction with an arbitrary linear code respecting 
the quantum constraint can be implemented with $O(d \log(d))$ parity check measurements
if the code distance grows polynomially with $n$, that is $d \geq \Omega(n^\alpha)$
for some $\alpha > 0$.
In the case of a family of codes with polylog distance $d \geq \Omega(\log(n)^\alpha)$,
a sequence of $O(d^{1 + \frac{1}{\alpha}})$ parity check measurements is enough
for fault-tolerant error correction.
This speed-up is doubly beneficial for error correction. On the one hand, it 
reduces the time per error correction cycle. On the other hand, fewer measurements 
means less noise during the correction cycle, improving the life-time of
encoded data.

\medskip
For a code defined by $r$ parity check equations, one may be tempted to conjecture
that at least $r$ measurements must be performed to achieve fault-tolerant error correction
since the code is not fully defined by fewer than $r$ parity checks.
Moreover, Bombin proved that {\em single-shot} fault-tolerant error correction, 
based on exactly $r$ parity check measurements,
is possible for a family of highly structured quantum codes \cite{bombin2015:single_shot}.
Surprisingly, our work demonstrates that one can go beyond single-shot and design 
fault-tolerant error correction schemes that use much fewer than $r$ measurements.
Applying Theorem~\ref{theo:time_overhead} for a family of codes with
positive encoding rate $k/n \rightarrow R$ with $0 < R < 1$ and minimum distance 
$d = \Omega(n^\alpha)$ for $0< \alpha < 1$, we obtain a fault-tolerant
error correction scheme based on $O(d \log(d)) = O(n^\alpha \log(n))$ parity check 
measurements for a code defined by $r = \Omega(n)$ parity checks.
The same result holds for polylog distance codes with positive rate.
In that case, the length of the measurement sequence $O(d^{1 + \frac{1}{\alpha}})$ 
is exponentially smaller than the number of checks $r$.
This result applies to standard families of codes with sub-linear 
distance such as 
turbo codes \cite{berrou1993turbo_codes},
cycle space of graph \cite{decreusefond1997error}, 
finite geometry codes \cite{kou2001finite_geo_LDPC},
and polar codes \cite{arikan2009polar_codes},
providing many examples of {\em sub-single-shot} fault-tolerant error correction schemes.

\medskip
We were not able to prove a lower bound that matches our upper bound
$O(d \log(d))$ on the length of fault-tolerant measurement sequences
for polynomial distance codes.
We conjecture that a linear length sequence can be achieved.

\medskip
\begin{conj} \label{conjecture}
For any family of $[n, k, d]$ linear codes with polynomial distance
$d \geq \Omega(n^\alpha)$ for some constant $\alpha > 0$,
there exists a sequence of $O(d)$ parity check measurements that allows
for fault-tolerant error correction.
\end{conj}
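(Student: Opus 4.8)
The plan is to phrase the conjecture through the \emph{space-time code} associated with a measurement schedule, as in Theorem~\ref{theo:time_overhead}. A schedule of length $m$ is encoded by an extended parity-check system whose coordinates are the fault locations---one possible data fault per bit and per round, plus one measurement fault per round---and fault-tolerant error correction succeeds exactly when every fault pattern with trivial syndrome history but nontrivial residual data error has weight at least $d$. In this picture the $O(d\log d)$ schedule of Theorem~\ref{theo:time_overhead} arises by taking the measured checks to be a Gilbert--Varshamov family of linear combinations of the rows of $H$, chosen so that the induced spatial syndrome map already separates all errors of weight $\le t=(d-1)/2$, and then appending a constant-rate temporal code to detect measurement faults. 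To reach $O(d)$ I would keep this reduction and attempt to remove the single logarithmic factor, which enters only through the spatial separation of input errors.

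The logarithm is entirely spatial: the temporal code guarding against measurement faults can already be taken of constant rate, so the factor $\log d$ is the price of separating all $\sum_{i=0}^{t}\binom{n}{i}$ input errors by a single static syndrome map. I would therefore abandon the goal of a single non-adaptive round that fully identifies the input error and instead exploit the temporal dimension directly: design $O(d)$ measurements per round that certify the \emph{absence} of a large residual error and correct the typical constant number of fresh faults, while the rare event of a large incoming error is flagged and deferred to an occasional heavier round. A fixed schedule of linear length would then suffice provided one adopts a relaxed fault-tolerance criterion in which a single round must only prevent the error from growing, rather than pin down the coset of every weight-$\le t$ input.

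The main obstacle, and the reason a matching lower bound has resisted proof, is a sphere-packing tension that a naive reading of the conjecture cannot survive. If a single schedule must identify the coset modulo $C=\Ker H$ of every input error of weight $\le t$, then the measured checks span a subspace $V\subseteq C^\perp$ for which $V^\perp$ contains $C$ and has minimum distance $\ge d$; the Hamming bound then forces $\dim V \ge \log_2 \sum_{i=0}^{t}\binom{n}{i} = \Omega\!\left(d\log(n/d)\right)$, which is $\Omega(d\log d)$ as soon as $d\ge\Omega(n^\alpha)$. Since $m\ge\dim V$, any such scheme already needs $\Omega(d\log d)$ measurements, matching Theorem~\ref{theo:time_overhead} and seemingly contradicting the conjecture. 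The crux is thus to decide whether the weaker criterion sketched above---partial correction amortized over many rounds, or adaptivity---genuinely escapes this static bound: proving Conjecture~\ref{conjecture} amounts to exhibiting a linear-length schedule under such a relaxed-but-sufficient criterion, whereas refuting it amounts to lifting the sphere-packing argument from a single static syndrome to the full space-time code. I expect this last information-theoretic step to be the true difficulty, since the temporal correlations among measurement faults are precisely what a static counting bound fails to capture.
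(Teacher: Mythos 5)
The paper contains no proof of this statement: it is posed as Conjecture~\ref{conjecture} precisely because the authors could not establish a lower bound matching the $O(d\log d)$ upper bound of Theorem~\ref{theo:time_overhead}. So there is nothing in the paper to compare your attempt against, and your proposal is not a proof either --- it is a research program, as you acknowledge. The substantive point is that the ``obstacle'' in your final paragraph is not merely an obstacle: under the paper's own definitions it is a correct counting argument showing the conjecture as literally stated is \emph{false} whenever $n_D/d_D$ grows polynomially. Applying Definition~\ref{def:fault_tolerant_decoder} to a circuit error consisting of a pure input error ($f=0$ and $e^1=\dots=e^{n_M}=0$) forces $|\pi+\hat\pi|\le 0$, i.e.\ the decoder must return $e^0$ exactly from the noiseless outcome $m=e^0H_m^T$ for every $|e^0|\le t=(d_D-1)/2$. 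Hence $e\mapsto eH_m^T$ must be injective on the Hamming ball of radius $t$, so $2^{n_M}\ge\sum_{i=0}^{t}\binom{n_D}{i}$ and $n_M\ge t\log_2(n_D/t)=\Omega\left(d_D\log(n_D/d_D)\right)$. For a family with $d_D=\Theta(n_D^{\alpha})$ and $0<\alpha<1$ this is $\Omega(d_D\log d_D)$, exceeding the conjectured $O(d_D)$ and showing Theorem~\ref{theo:time_overhead} is optimal up to constants in that regime; only for linear-distance families does the bound degenerate to $\Omega(d_D)$ and leave the question open.

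The escape routes you then propose --- amortizing correction over many rounds, adaptivity, or a relaxed criterion in which a round need only prevent error growth rather than pin down the input coset --- all change the statement being proved: none of them yields Conjecture~\ref{conjecture} under Definition~\ref{def:fault_tolerant_decoder} and Postulate~\ref{postulate}, and the paper supplies no such relaxed definition to appeal to. A secondary inaccuracy: your description of Theorem~\ref{theo:time_overhead} as a Gilbert--Varshamov spatial code followed by a constant-rate temporal code is not how the paper argues; it draws all $n_M$ rows of $H_m$ uniformly from $C_D^{\perp}$ and applies a first-moment bound to propagating errors of weight below $d_D$, with no spatial/temporal factorization. The honest conclusion of your own analysis is that the conjecture must either be restricted to linear-distance families or restated with a weaker correctness criterion; as written, your counting argument refutes rather than supports it.
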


\medskip
The basic idea of our scheme is to extract as much safe information as possible
from the measurement of a redundant set of parity checks.
It was first noticed by \cite{fujiwara2014:syndrome_correction} that additional
measurements can be exploited to correct the syndrome values. 
Ashikhmin {\em et al.} \cite{ashikhmin2014:syndrome_correction, ashikhmin2016:syndrome_correction_LDGM} 
generalized this idea to arbitrary stabilizer codes.
This work also closely relates to the notion of single-shot error correction 
\cite{bombin2015:single_shot}
generalized recently by the work of Campbell \cite{campbell2019:single_shot}
which focuses on quantum Low Density Parity Check codes \cite{mackay2004:QLDPC, tillich2013:QLDPC}.
Our formalism applies to arbitrary linear codes and takes into account 
internal data errors which were not considered by
\cite{fujiwara2014:syndrome_correction, ashikhmin2014:syndrome_correction, 
ashikhmin2016:syndrome_correction_LDGM}.

\medskip
Section~\ref{sec:FTEC} reviews the concept of fault tolerance within
the formalism of linear codes. We propose a definition of fault tolerance
and we apply this definition to prove that fault-tolerant error correction
increases the lifetime of encoded data.
Section~\ref{sec:FTDECODING} focuses on the design of a notion 
of minimum distance and a minimum weight decoder adapted to the 
context of fault tolerance. 
Theorem~\ref{theo:time_overhead}, presented in Section~\ref{sec:BOUND},
is the main result of this paper.
Relying on the results of Section~\ref{sec:FTDECODING}, it provides an 
upper bound on the number of measurements necessary to make a 
linear code fault-tolerant.
Numerical results illustrating our scheme are presented in Section~\ref{sec:numerics}.
In particular, our Monte-Carlo simulations show that the lifetime of
data encoded with a fault-tolerant error correction scheme can surpass 
the lifetime of physical data.

\section{Fault-tolerant error correction} \label{sec:FTEC}

\subsection{Background on error correction} \label{subsec:FTEC:error_correction}

We consider error correction based on classical binary linear codes.
For a more complete treatment of this topic we refer 
to \cite{macwilliams1977:ECC, vanlint2012:coding_theory}.
A {\em linear code} $C$, or simply a code, 
with length $n$ and minimum distance $d$
is defined to be a $k$-dimensional subspace of $\Z_2^n$ such that 
the minimum Hamming weight of a non-zero codeword $x \in C$
is $d$. We denote by $[n, k, d]$ the parameters of the code or $[n, k]$
when the minimum distance is unknown.
If some bits of a codeword $x$ are flipped, it is mapped onto $y = x+e$ 
for some $e \in \Z_2^n$.
If the number of bit flip that occur satisfies $|e| \leq (d-1)/2$.
we can recover $x$ by selecting the closest codeword from $y$.

\medskip
A linear code can be defined by a generator matrix $G \in M_{k, n}(\Z_2)$,
such that the rows of $G$ form a basis of $C$. The code $C$ is the set of
vectors $x G$, where $x \in \Z_2^k$ and 
the transformation $x \mapsto x G$ is an encoding map.
Alternatively, a linear code can be given by a parity check matrix 
$H \in M_{r, n}(\Z_2)$, such that the codewords of $C$ 
are the vectors $x$ with $x H^T = 0$.

\medskip
For example, the Hamming code with parameters $[6, 4, 3]$ is defined by 
the parity check matrix
\begin{align} \label{eqn:parity_check_matrix_hamming}
H = 
\begin{pmatrix}
1 & 0 & 1 & 0 & 1 & 0 & 1 \\ 
0 & 1 & 1 & 0 & 0 & 1 & 1 \\ 
0 & 0 & 0 & 1 & 1 & 1 & 1
\end{pmatrix} \cdot
\end{align}
The two following generator matrices
\begin{align} \label{eqn:generator_matrices}
G_1 = 
\begin{pmatrix}
1 & 0 & 0 & 1 & 1 & 0 \\
0 & 1 & 0 & 1 & 0 & 1 \\
0 & 0 & 1 & 0 & 1 & 1
\end{pmatrix}
\quad 
\text{ and }
\quad 
G_2 = 
\begin{pmatrix}
1 & 0 & 0 & 1 & 0 & 1 & 1 & 0 & 0 & 1 \\
0 & 1 & 0 & 1 & 1 & 0 & 1 & 0 & 1 & 0 \\
0 & 0 & 1 & 1 & 1 & 1 & 1 & 1 & 0 & 0
\end{pmatrix}
\end{align}
define two linear codes with parameters $[6, 3, 3]$ and $[10, 3, 5]$ respectively.

\medskip
Assume that an error $e$ occurs on a codeword $x$ in the code $C$, resulting 
in $x' = x+e$.
Error correction is based on the computation of the {\em syndrome} $s = (x+e) H^T = e H^T$.
A non-trivial syndrome indicates the presence of an error.
The value of the syndrome depends only on the error $e$.
By {\em decoding} we mean estimating the error $e$ given its syndrome $s$.
A decoder is a map 
$
D: \Z_2^{r} \longrightarrow \Z_2^{n}.
$
The decoding is said to be {\em successful} when $e$ is correctly identified by the decoder
that is if $D(s) = e$.
For practical purposes, an efficient implementation of the map $D$ is required.

\medskip
We call {\em minimum weight error (MWE) decoder}, a decoder that 
returns an error with minimum weight among the errors with syndrome $s$,
where $s$ is the observed syndrome.
In what follows, $D_{\MWE}^H$ denotes a MWE decoding map for the code $C$
with parity check matrix $H$.
A MWE decoder successfully identifies any error $e$ with weight up to $(d-1)/2$.

\medskip
A standard noise model in information theory is the {\em binary symmetric channel}
with crossover probability $p$. Each bit is flipped independently with probability $p$.
A MWE decoder can be used to correct this type of noise since when $p<1/2$
the error $\hat e = D_{\MWE}^H(s)$ returned for a syndrome $s$ is a most likely error (MLE)
for the binary symmetric channel, 
{\em i.e.} it maximizes the conditional probability $\Prob(e | s)$ among
the errors with syndrome $s$. We use the notation $\hat e$ to refer to
an estimation of the error $e$.

%
%\medskip
%In what follows, $\Prob_p$ denotes the Bernoulli distribution over $\Z_2$ defined by 
%$\Prob_p(0) = 1-p$ and $\Prob_p(1) = p$.
%The corresponding product distribution over $\Z_2^n$ is denoted
%$$
%\Prob_{n, p} = \prod_{i=1}^n \Prob_p \cdot
%$$
%When the noise is described by a binary symmetric channel, a bit string $x$ of length 
%$n$ is mapped onto the bit string $x+e$ with probability 
%$\Prob_{n, p}(e) = p^{|e|} (1-p)^{n - |e|}$.
%We also consider a noise model where the bit flip probability varies over the $n$ bits. 
%Then the error is sampled from the distribution
%$$
%\Prob_{p_1, \dots, p_n} = \prod_{i=1}^n \Prob_{p_i}
%$$
%where $p_i$ the bit flip probability over bit $i$.
%We have $\Prob_{n, p} = \Prob_{p, \dots, p}$. 
%The expectations corresponding to these probability distributions are denoted
%by $\Esp_{n, p}$ and $\Esp_{p_1, \dots, p_n}$ respectively.

\subsection{Circuit error} \label{subsec:FTEC:circuit_error}

We are interested in the design of fault-tolerant error correction schemes
that work even when measured syndromes are noisy and data errors can
be introduced during the correction steps.

\medskip
Our goal is to protect a set of $n_D$ {\em data bits} using a fixed linear code $C_D$ 
with parameters $[n_D, k_D, d_D]$. Figure~\ref{fig:tanner0} presents our notations
for the error model.
We refer to the error present on the data bits before correction as the 
{\em input error} denoted $e^0$.
In order to correct errors with $C_D$, a sequence of $n_M$ measurements
is applied to the data bits;
each measurement returns the parity $m_i \in \{0, 1\}$ of a subset of 
the data bits. 
In the fault-tolerant setting, the bit $m_i$ may be flipped, resulting in
the outcome vector $m = m(e^0) + f \in Z_2^{n_M}$ where
$m(e^0) = e^0 H_D^T$ is the ideal measurement outcome and 
$f$ is called {\em measurement error}.
%In some device, it may also happen that the measurement fails and that no 
%outcome is returned, which can be modeled as a lost or erased outcome.
%We focus on the more challenging correction of flipped outcomes.
Faults in the measurement device also affect the data bits.
We model this source of error as a bit flip occurring after each
parity check measurement. 
Denote by $e^i \in \Z_2^{n_D}$ the {\em level-$i$ internal error} that occurs 
after the $i$ th measurement for $i=1, 2, \dots, n_M$.

\begin{figure}
\centering
\includegraphics[scale=.5]{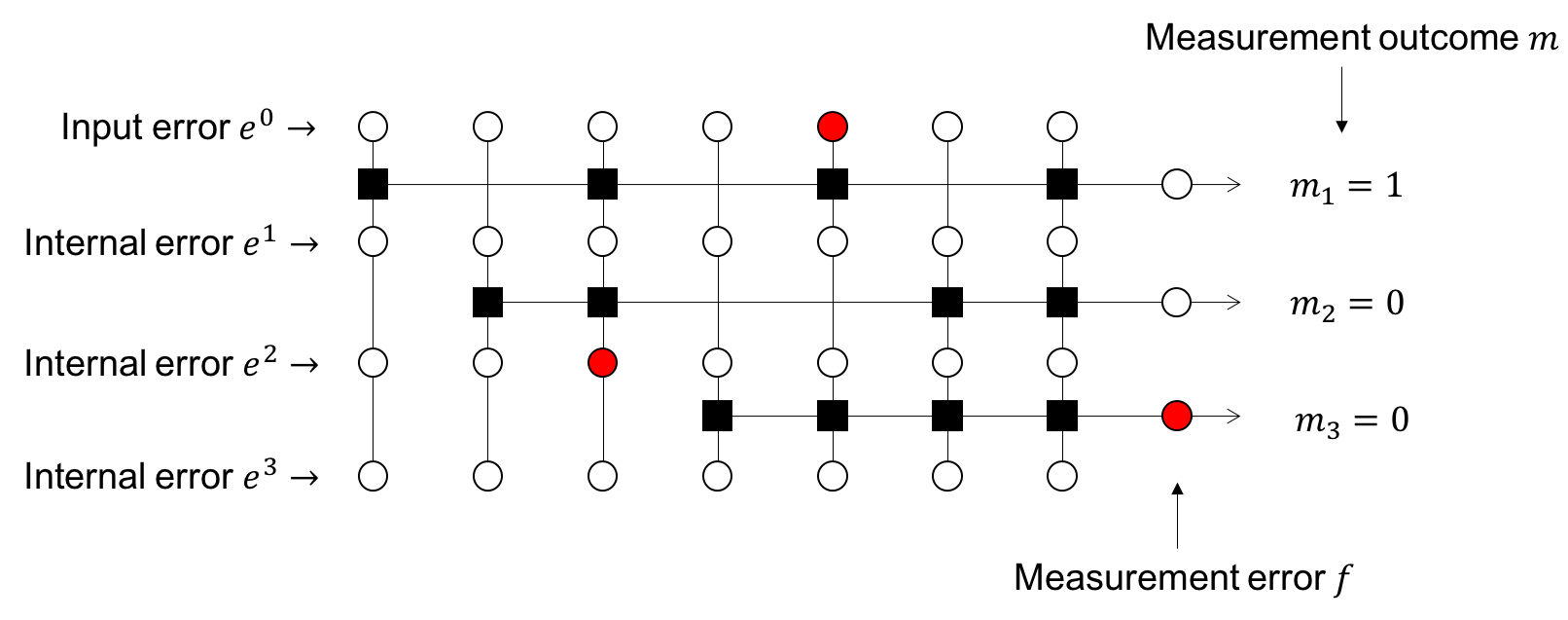}
\caption{Errors during the syndrome measurement for the Hamming code. 
A sequence of three measurements given by the three rows of the 
parity check matrix~\eqref{eqn:parity_check_matrix_hamming} is realized.
Circles indicates bit flip locations.  
A row of squares connected horizontally corresponds to a parity check measurement
between the bits marked by a square.
Red circles show a circuit error with one input error, one internal error and one measurement
error.
The input error is $e^0 = (0000100)$. 
An internal error occurs on the third data bit after measurement of $m_2$,
that is $e^2 = (0010000)$. The third outcome is flipped which means $f = (001)$.
The observed syndrome is $m = (1, 0, 0)$.}
\label{fig:tanner0}
\end{figure}

\medskip
Overall a {\em circuit error} is a pair $\varepsilon = (e, f)$ with
\begin{itemize}
\item Input error: $e^0 \in \Z_2^{n_D}$ on data bits.
\item Internal error: $(e^1, e^2,  \dots,  e^{n_M}) \in (\Z_2^{n_D} )^{n_M}$ on data bits.
\item Measurement error: $f \in \Z_2^{n_M}$ on measurement outcomes.
\end{itemize}
%We call {\em data error location} a pair $(i, j)$ with $0 \leq i \leq n_M$ and $1 \leq j \leq n_D$
%A bit flip can occurs in $n_D (n_M+1)$ locations $(i, j)$
A circuit error is a binary vector of length $(n_D+1)(n_M+1)-1$.
The Hamming weight of a circuit error $\varepsilon = (e, f)$ is denoted by 
$|\varepsilon| = |e^0| + \dots + |e^r| + |f|$.

\subsection{Fault-tolerant decoder}  \label{subsec:FTEC:ft_decoder}

Error correction aims at identifying the data error, but this is a moving target
since internal errors can occur during the correction.

\medskip
Our goal is to correct the effect of a circuit error $\varepsilon$ 
on the data given the measurement outcome $m(\varepsilon)$.
One could aim at identifying the exact circuit error $\varepsilon$, 
but this is too ambitious because many circuit errors lead to 
the same outcome $m$.
Given $m = m(\varepsilon)$, our objective is to determine  
{\em residual data error} defined by
$$
\pi(\varepsilon) = \sum_{i=0}^{n_M} e^i
$$
after a sequence of $n_M$ measurements.

\medskip
A {\em decoder} is a map 
$D: \Z_2^{n_M} \rightarrow \Z_2^{n_D}$ 
that estimates the residual data error given the outcome $m$ observed.
When no confusion is possible, we denote by $\pi$ the residual error 
$\pi(\varepsilon)$ and the estimation returned by the decoder 
is denoted by $\hat \pi = D(m(\varepsilon))$.

\medskip
Aiming at identifying the exact residual error $\pi$ is still too 
ambitious. Some internal bit flips occur too late to be recognized. 
By trying to correct those late errors, we might actually inject 
additional errors. The following lemma makes this idea rigorous.
The {\em level} of a circuit error $\varepsilon$ is the first level 
$j$ such that $e^j \neq 0$.

\begin{lemma} \label{lemma:late_errors_amplification}
For any decoder $D$ we have
\begin{itemize}
\item Either $D$ corrects no circuit error $\varepsilon$ of level $n_M-1$,
{\em i.e.} $D(m(\varepsilon)) = 0$,
\item Or $D$ amplifies at least one error $\varepsilon$, 
{\em i.e.} $|\pi(\varepsilon) + D(m(\varepsilon))| > |\pi(\varepsilon)|$.
\end{itemize}
\end{lemma}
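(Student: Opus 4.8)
The plan is to prove the dichotomy by contraposition: I would assume that $D$ amplifies no circuit error, i.e. that $|\pi(\varepsilon) + D(m(\varepsilon))| \le |\pi(\varepsilon)|$ for every $\varepsilon$, and deduce that $D(m(\varepsilon)) = 0$ for every circuit error $\varepsilon$ of level $n_M-1$, which is exactly the first alternative. Everything rests on one structural feature of the error model of Section~\ref{subsec:FTEC:circuit_error}: the level-$i$ internal error $e^i$ is injected \emph{after} the $i$-th measurement, so the outcome of measurement $i$ is determined by the accumulated error $e^0 + \cdots + e^{i-1}$ present just before it is read, together with the fault $f_i$. The crucial consequence I would isolate first is that the last internal error $e^{n_M}$ occurs after every measurement and therefore leaves the entire outcome vector $m(\varepsilon)$ unchanged.

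Next I would exploit this invisibility to cancel the residual error without touching the syndrome. Fix a circuit error $\varepsilon = (e,f)$ of level $n_M-1$, so that $e^0 = \cdots = e^{n_M-2} = 0$ and $a := e^{n_M-1} \neq 0$; its residual is $\pi(\varepsilon) = a + e^{n_M}$. I would then introduce a companion error $\varepsilon'$ that agrees with $\varepsilon$ in $f$ and in all of $e^0, \dots, e^{n_M-1}$, but whose last internal error is reset to equal $a$. By the structural fact above, $m(\varepsilon') = m(\varepsilon)$ since $\varepsilon$ and $\varepsilon'$ differ only in the invisible slot $e^{n_M}$, so $D$ returns the same estimate on both. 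On the other hand the residual collapses: $\pi(\varepsilon') = \sum_{i=0}^{n_M-1} e^i + a = a + a = 0$ in $\Z_2^{n_D}$.

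Applying the no-amplification hypothesis to $\varepsilon'$ then finishes the argument immediately: $|D(m(\varepsilon'))| = |\pi(\varepsilon') + D(m(\varepsilon'))| \le |\pi(\varepsilon')| = 0$, so $D(m(\varepsilon')) = 0$, and hence $D(m(\varepsilon)) = 0$. Since $\varepsilon$ was an arbitrary level-$(n_M-1)$ circuit error, this establishes the first alternative whenever the second fails, proving the lemma.

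I expect the only point requiring genuine care to be the first step: one must argue precisely from the circuit model that $m(\varepsilon)$ is independent of $e^{n_M}$ (and, more generally, pin down exactly which internal errors each outcome bit sees), since the excerpt writes out $m(e^0) = e^0 H_D^T$ only for the pure-input-error case. Once that dependence is correctly formalized, the cancellation trick of setting the last internal error equal to $e^{n_M-1}$ is the whole conceptual content; there is no case analysis, and the code parameters $d_D$ and the matrix $H_D$ play no role beyond the bookkeeping of when each error acts.
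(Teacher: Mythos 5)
Your proof is correct, and it takes a genuinely different route from the paper's. The paper argues for the second alternative directly: assuming the last parity check touches at least two data bits, it exhibits two distinct weight-one level-$(n_M-1)$ errors supported on that check which produce the same outcome $(0,\dots,0,1)$ but have different weight-one residuals, so a decoder that corrects one strictly amplifies the other. You instead prove the contrapositive by pairing each level-$(n_M-1)$ error $\varepsilon$ with a companion $\varepsilon'$ that differs only in the post-measurement slot $e^{n_M}$, chosen so that $\pi(\varepsilon') = 0$ while $m(\varepsilon') = m(\varepsilon)$; the no-amplification hypothesis applied to $\varepsilon'$ then forces $D(m(\varepsilon)) = 0$. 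Both arguments rest on the same structural fact --- bit flips occurring after the last measurement are invisible to the outcome --- but yours pushes it one level further, to $e^{n_M}$, to obtain exact cancellation of the residual. What this buys: you need no hypothesis on the weight of the last check, you handle arbitrary measurement-error components $f$ without comment, and your zero-residual companion disposes uniformly of every possible nonzero value of $D(m)$, including corner cases (e.g.\ $D(m)$ equal to the full support of a weight-two final check) that the paper's comparison of two single flips does not by itself exhibit as amplified. The one point you flag --- that $m(\varepsilon)$ is independent of $e^{n_M}$ --- is indeed the load-bearing step, and it is immediate from the circuit error model of Section~\ref{subsec:FTEC:circuit_error}, where the level-$i$ internal error is injected after the $i$-th measurement and $\pi(\varepsilon) = \sum_{i=0}^{n_M} e^i$ explicitly includes the slot $e^{n_M}$ that no check ever reads.
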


\begin{proof}
Assume that the last measurement involves $s \geq 2$ data bits.
A level-$(n_M - 1)$ error either results in a trivial outcome or yields
$m = (0 \dots 0 1)$.
Since $s \geq 2$, at least two distinct level-$(n_M - 1)$ errors 
$\varepsilon$ and $\varepsilon'$ lead to the outcome $m$. 
If $D$ corrects one of them, say $D(m) = \pi(\varepsilon)$, 
then the error $\varepsilon'$ is amplified.
\end{proof}

\medskip
The previous lemma presented motivates the following definition
of fault tolerance.
\begin{defi} \label{def:fault_tolerant_decoder}
A {\em fault-tolerant decoder} is defined to be a map 
$D: \Z_2^{n_M} \rightarrow \Z_2^{n_D}$ 
such that for all circuit error $\varepsilon = (e, f)$ such that 
$|\varepsilon| \leq (d_D-1)/2$ we have 
\begin{align} \label{eq:ft_condition}
|\pi + \hat \pi| \leq |f| +  \sum_{i=1}^{n_m} |e^{i}| = |\varepsilon| - |e^0|
\end{align}
where $\pi = \pi(\varepsilon)$ is the residual data error and 
$\hat \pi$ is the estimation of $\pi$ 
returned by the decoder.
\end{defi}

Roughly speaking, a fault-tolerant decoder corrects the input error $e^0$ 
without amplifying any internal error or measurement error.
Gottesman considers a notion of fault-tolerant quantum error correction
based on two conditions (ECA and ECB in \cite{gottesman2010:threshold}).
Our definition is similar to ECB. We do not need ECA that is useful 
for code concatenation in \cite{gottesman2010:threshold}.
Proposition~\ref{prop:ave_life_time_increase} below proves that 
Definition~\ref{def:fault_tolerant_decoder} provides a satisfying
notion of fault-tolerant error correction.

\subsection{Storage lifetime} \label{subsec:FTEC:life_time}

In this section, we will prove that encoded data constantly corrected with a fault-tolerant decoder 
can be preserved for a longer time than raw data.
Our proof of this property can be seen as a basic application of the 
rectangle method \cite{aliferis2005:threshold}.
This provides another justification for the definition of fault tolerance proposed
in Section~\ref{subsec:FTEC:ft_decoder}.
Figure~\ref{fig:storage_model} summarizes the basic idea of the storage noise model 
and the rectangle method is illustrated with 
Figure~\ref{fig:storage_model_rectangles}.

\medskip
Consider some data stored in an imperfect device and assume that,
at each time step, stored bits are flipped independently with probability 
$p$.
On any given bit, an error occurs in average after $1/p$ time steps. 

\begin{figure}
\centering
\includegraphics[scale=.4]{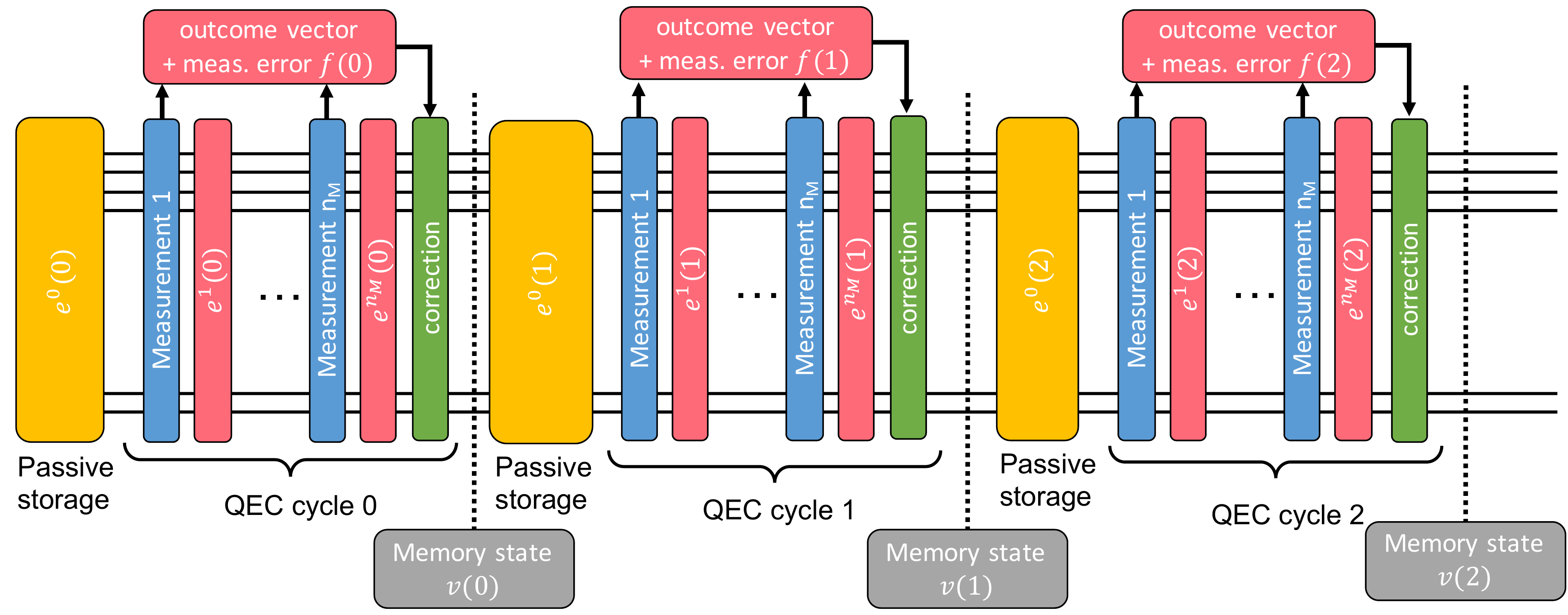}
\caption{Storage errors and memory state over three correction cycles.
The horizontal wires represent the $n_D$ data bits and 
time flows from left to right.
We alternate between rounds of passive storage and error-correction cycles.
The size of the blocks do not represent their duration.
}
\label{fig:storage_model}
\end{figure}

\medskip
In order to extend the lifetime of our data, we store encoded data
using a code $C_D$ with parameters $[n_D, k_D, d_D]$.
The state of the $n_D$ stored bits is described by a vector $v \in \Z_2^{n_D}$
called {\em memory state}.
If $v = c$ is a codeword of $C_D$, it represents some encoded information.
The same information $c$ can still be recovered from the vector $v = c+e^0$,
affected by a low-weight error $e^0$.
We say that a memory state 
$v \in \Z_2^{n_D}$ 
{\em stores the information $c \in C_D$}
if $c$ is the unique closest codeword of $C_D$ from the vector $v$.
We consider that a state $v$ that admits multiple closest codewords 
does not store any information.
For any error $e^0$ such that $|e^0| \leq (d_D-1)/2$, 
the information stored in a vector $v = c+e^0$ can be extracted by running 
a MWE decoder for the code $C_D$.

\medskip
In order to protect the stored data against the accumulation of errors,
we regularly run a fault-tolerant decoder. 
We alternate between passive storage and rounds of error correction.
Denote by $e^0(t)$ the $n_D$-bit error that accumulates on the memory state 
during the $t$ th storage round for $t \in \N$.
Let $f(t)$ and $e^1(t), \dots, e^{n_M}(t)$ be the measurement error and 
the internal errors that appear during the $t$ th correction round 
which takes $e^0(t)$ as an input error.
As one can see in Figure~\ref{fig:storage_model},
this defines a sequence of circuit errors 
$\varepsilon(t) = ( e(t), f(t) )$ for each time step $t \in \N$ that we call {\em storage error}.

\medskip
Consider the sequence $v(t)_{t \in \N}$ of memory states obtained after each 
round of error correction.
The {\em storage lifetime} of $c$ in the sequence $v(t)_{t \in \N}$ is defined to be
the first time step $t$ such that $v(t)$ does not store $c$ anymore.
The storage lifetime depends only on the storage error 
$\varepsilon(t)_{t \in \N}$ 
and not on $c$.
In this work, $\ell(\varepsilon)$ denotes the storage lifetime for a storage error 
$\varepsilon = \varepsilon(t)_{t \in \N}$.
The following lemma provides a sufficient condition to ensure that 
the stored data is not lost. It can be seen as a simple case of the rectangle 
method introduced in \cite{aliferis2005:threshold} as one can see in 
Fig~\ref{fig:storage_model_rectangles}.

\begin{lemma}[rectangle method] \label{lemma:FTEC:life_lower_bound}
Consider storage device equipped with a fault-tolerant decoder.
Let $\varepsilon(t)_{t \in \N}$ be a storage error that satisfies 
\begin{align} \label{eqn:lemma:life_lower_bound}
%|f(t-1)| + |g(t-1)| + |u(t)| + |f(t)| + |g(t)| \leq  (d_D-1)/2
|\varepsilon(t-1)| - |e^0(t-1)| + |\varepsilon(t)| \leq  (d_D-1)/2
\end{align} 
for all $t = 0, \dots, N$, using the convention $|\varepsilon(-1)| = |e^0(-1)| = 0$.
Then, the storage lifetime $\ell(\varepsilon)$ is at least $N$.
\end{lemma}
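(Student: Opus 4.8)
The plan is to track, round by round, the residual error that separates the actual memory state from the codeword $c$, and to show by induction that this residual stays within the correction radius $(d_D-1)/2$ for every $t \le N$. Write $v(t) = c + r(t)$, where $r(t)$ is the \emph{residual error} left after the $t$-th correction cycle, with the convention $r(-1) = 0$ so that the initial state stores $c$. Showing that $v(t)$ stores $c$ amounts to showing $|r(t)| \le (d_D-1)/2$: if this holds, then for any other codeword $c'$ we have $|v(t) + c'| \ge |c + c'| - |r(t)| \ge d_D - (d_D-1)/2 > (d_D-1)/2 \ge |v(t)+c|$, so $c$ is the unique closest codeword.

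The first key step is to notice that the correction cycle at time $t$ does not act on the circuit error $\varepsilon(t)$ in isolation: the residual error $r(t-1)$ inherited from the previous cycle is still present on the data when the storage error $e^0(t)$ is added. I would therefore apply the fault-tolerant decoder to the \emph{effective} circuit error $\tilde\varepsilon(t)$ whose input error is $\tilde e^0(t) = r(t-1) + e^0(t)$ and whose internal and measurement parts coincide with those of $\varepsilon(t)$. Its residual data error is $\pi(\tilde\varepsilon(t)) = r(t-1) + \pi(\varepsilon(t))$, and since the observed outcome is precisely $m(\tilde\varepsilon(t))$, after the decoder applies its estimate $\hat\pi(t) = D(m(\tilde\varepsilon(t)))$ the new residual is exactly $r(t) = \pi(\tilde\varepsilon(t)) + \hat\pi(t)$.

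The induction hypothesis I would carry is the stronger bound $|r(t-1)| \le |\varepsilon(t-1)| - |e^0(t-1)|$, which holds at $t = 0$ by the conventions $r(-1) = 0$ and $|\varepsilon(-1)| = |e^0(-1)| = 0$. Under this hypothesis the triangle inequality gives
\[
|\tilde\varepsilon(t)| \le |r(t-1)| + |\varepsilon(t)| \le |\varepsilon(t-1)| - |e^0(t-1)| + |\varepsilon(t)| \le (d_D-1)/2,
\]
where the last inequality is exactly assumption~\eqref{eqn:lemma:life_lower_bound}. This places $\tilde\varepsilon(t)$ in the regime where Definition~\ref{def:fault_tolerant_decoder} applies, so
\[
|r(t)| = |\pi(\tilde\varepsilon(t)) + \hat\pi(t)| \le |\tilde\varepsilon(t)| - |\tilde e^0(t)| = |f(t)| + \sum_{i=1}^{n_M}|e^{i}(t)| = |\varepsilon(t)| - |e^0(t)|,
\]
using that the internal and measurement parts of $\tilde\varepsilon(t)$ and $\varepsilon(t)$ agree. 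This re-establishes the induction hypothesis at level $t$, and since the leftmost term of~\eqref{eqn:lemma:life_lower_bound} is nonnegative we get $|\varepsilon(t)| \le (d_D-1)/2$, hence $|r(t)| \le |\varepsilon(t)| - |e^0(t)| \le (d_D-1)/2$. Thus $v(t)$ stores $c$ for every $t \le N$ and the storage lifetime is at least $N$.

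The conceptual crux, and the step I expect to require the most care, is the bookkeeping in the second paragraph: the fault-tolerant condition must be invoked on the combined error $\tilde\varepsilon(t)$, and Definition~\ref{def:fault_tolerant_decoder} bounds the post-correction residual by $|\varepsilon(t)| - |e^0(t)|$ \emph{independently} of how large the carried-over input $\tilde e^0(t)$ has grown. This is precisely what prevents the residual from accumulating across cycles: the combined input error (old residual plus fresh storage error) is fully corrected, while only the internal and measurement faults of the current cycle can leak into $r(t)$. The remaining verifications, namely the triangle-inequality chain and the unique-closest-codeword argument, are routine.
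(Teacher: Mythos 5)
Your proof is correct and follows essentially the same route as the paper's: an induction carrying the bound $|r(t-1)| \le |\varepsilon(t-1)| - |e^0(t-1)|$ on the post-correction residual, folding that residual into the input error of the next cycle, and invoking the fault-tolerance condition of Definition~\ref{def:fault_tolerant_decoder} on the combined error. If anything, your explicit distinction between $\tilde\varepsilon(t)$ and $\varepsilon(t)$ makes the bookkeeping cleaner than the paper's, which conflates the two in the statement of its intermediate claim.
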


\begin{proof}
We use the notation $v(t)$ for the memory state after the correction round $t$.
Without loss of generality we can assume that the initial memory state is $c = 0$. 
In order to prove that the information stored is preserved throughout the $N$ first 
rounds of correction, it suffices to show that the error $\varepsilon(t) = (e(t), f(t))$ 
treated by correction round $t$ satisfies 
\begin{align} \label{eqn:lemma:life_lower_bound:proof}
|\varepsilon(t)| \leq (d_D - 1)/2,
\end{align}
for all $t = 0, \dots, N$.
Indeed, by definition of fault tolerance, this implies that the memory state $v(t)$  
after correction has weight at most $(d_D - 1)/2$ proving that the information stored 
is not lost.

We can prove by induction that $\varepsilon(t)$ satisfies 
Eq.~\eqref{eqn:lemma:life_lower_bound:proof}.
The input error for the first round of correction is $e^0(0)$ that 
satisfies Eq.~\eqref{eqn:lemma:life_lower_bound:proof} by 
assumption~\eqref{eqn:lemma:life_lower_bound}.
Assume now that $\varepsilon(t-1)$ satisfies the 
inequality~\eqref{eqn:lemma:life_lower_bound:proof} for some $1 \leq t \leq N$.
Then, after correction it remains an error $v(t-1)$ such that 
$|v(t-1)| \leq |f(t-1)| + \sum_{i=1}^{n_M} |e^{i}(t-1)|$.
The input error of the next correction round (round $t$) is then 
$v(t-1) + e^{0}(t)$. It satisfies
\begin{align*}
|v(t-1)| + |e^{0}(t)| + |f(t)| + \sum_{i=1}^{n_M} |e^{i}(t)|
	& \leq |f(t-1)| + \sum_{i=1}^{n_M} |e^{i}(t-1)| + |\varepsilon(t)| \\
	& = |\varepsilon(t-1)| - |e^0(t-1)| + |\varepsilon(t)| \\
	& \leq (d_D-1)/2 \cdot
\end{align*}
The last inequality is the application of the hypothesis~\eqref{eqn:lemma:life_lower_bound}.
This proves Eq.~\eqref{eqn:lemma:life_lower_bound:proof}, concluding the proof of the lemma.
\end{proof}

\begin{figure}
\centering
\includegraphics[scale=.4]{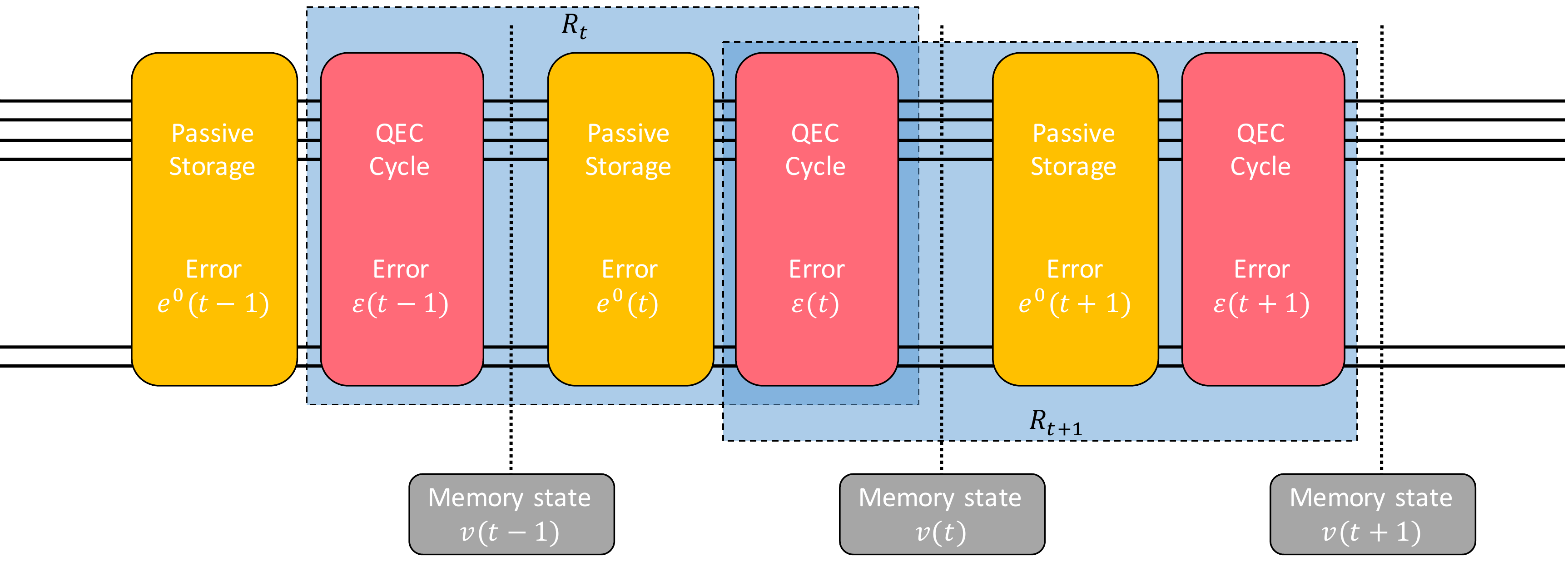}
\caption{
Illustration of the rectangle method.
Eq.~\eqref{eqn:lemma:life_lower_bound} means that the rectangle $R_{t}$
contains less than $(d_D-1)/2$ bit flips.
Lemma~\ref{lemma:FTEC:life_lower_bound} guarantees
that the memory state $v(t)$ stores the same data as the initial state $v(0)$
assuming that all the rectangles $R_0, \dots, R_{t}$ contains less than 
$(d_D-1)/2$ bit flips.
}
\label{fig:storage_model_rectangles}
\end{figure}

\medskip
Let us define a noise model for the storage.
During a storage round, the memory state bits are each flipped independently 
with probability $p$.
The outcomes measured during error correction rounds are flipped independently 
with probability $p$.
Moreover, during each parity-check measurement, each data bit is flipped 
independently with probability $p$.
In what follows, we denote by $\Prob_{\stor, p}$ the {\em storage distribution}
induced over the set of storage errors $\varepsilon(t)_{t \in \N}$.
To simplify, we assume the same probability for all bit flips.
This is a reasonable assumption to provide a proof that error correction 
increases the storage lifetime if the noise strength $p$ is small enough. 
For practical applications, we can adjust the different flip probabilities 
and introduce correlations that match the device's benchmarking results.

\begin{prop} \label{prop:ave_life_time_increase}
The probability that the storage lifetime is shorter than $N$ is upper 
bounded as follows
$$
\Prob_{\stor, p}( \ell(x) < N ) \leq N \binom {m}{ s } p^{ s }
$$
where $s = \frac{d_D + 1}{2}$ and $m = 2 n_M + 2 n_D n_M + n_D$.
\end{prop}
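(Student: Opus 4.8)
The plan is to combine the deterministic sufficient condition of Lemma~\ref{lemma:FTEC:life_lower_bound} with a union bound over the ``bad rectangle'' events. First I would take the contrapositive of the rectangle lemma: the storage lifetime satisfies $\ell(\varepsilon) \geq N$ whenever every rectangle obeys $|\varepsilon(t-1)| - |e^0(t-1)| + |\varepsilon(t)| \leq (d_D-1)/2$, so the event $\{\ell(\varepsilon) < N\}$ forces at least one rectangle to violate this inequality. Writing $w(t) = |\varepsilon(t-1)| - |e^0(t-1)| + |\varepsilon(t)|$ for the number of flips inside rectangle $R_t$ and using that $d_D$ is odd, a violation reads $w(t) \geq (d_D+1)/2 = s$. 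The induction inside the proof of Lemma~\ref{lemma:FTEC:life_lower_bound} shows that $v(t)$ stores the data as soon as $R_0, \dots, R_t$ are all good, so a loss before round $N$ is caused by a bad rectangle of index at most $N-1$; this gives
\begin{align*}
\Prob_{\stor, p}\bigl(\ell(\varepsilon) < N\bigr) \leq \sum_{t=0}^{N-1} \Prob_{\stor, p}\bigl(w(t) \geq s\bigr).
\end{align*}

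Next I would count the fault locations feeding a single rectangle and bound the per-rectangle probability. The weight $w(t)$ collects, from round $t-1$, the internal errors $e^1(t-1), \dots, e^{n_M}(t-1)$ and the measurement error $f(t-1)$ (that is $n_D n_M + n_M$ locations, the input error $e^0(t-1)$ being excluded), together with the full circuit error $\varepsilon(t)$ of round $t$ (which has $n_D + n_D n_M + n_M$ locations). These two location sets are disjoint, so $R_t$ involves exactly $m = 2 n_M + 2 n_D n_M + n_D$ candidate locations (and strictly fewer for $R_0$ under the convention $|\varepsilon(-1)| = 0$, for which $\binom{\cdot}{s} \leq \binom{m}{s}$ still applies). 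By the definition of $\Prob_{\stor, p}$, each of these bits is flipped independently with probability $p$, so $w(t)$ is a sum of at most $m$ independent Bernoulli$(p)$ variables. The event $\{w(t) \geq s\}$ implies that some fixed subset of $s$ of these locations is entirely flipped; a union bound over the $\binom{m}{s}$ such subsets, each an event of probability exactly $p^s$ by independence, yields $\Prob_{\stor, p}(w(t) \geq s) \leq \binom{m}{s} p^s$. Substituting into the previous display over the $N$ rectangles produces the claimed bound $N \binom{m}{s} p^s$.

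The elementary tail estimate and the location count are routine; the two points demanding care are exactly the bookkeeping items above. I must confirm that $\{\ell(\varepsilon) < N\}$ is covered by precisely the $N$ rectangles $R_0, \dots, R_{N-1}$ (rather than $N+1$), which is what pins down the prefactor $N$, and I must verify genuine independence of all the flips contributing to $w(t)$ — namely that the input, internal, and measurement flips across rounds $t-1$ and $t$ occupy distinct locations and are sampled independently by the storage noise model. Both follow directly from the induction in Lemma~\ref{lemma:FTEC:life_lower_bound} and from the way $\Prob_{\stor, p}$ was defined, so I anticipate no genuine obstacle beyond this careful accounting.
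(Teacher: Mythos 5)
Your proposal is correct and follows essentially the same route as the paper: define the bad event that a rectangle violates Eq.~\eqref{eqn:lemma:life_lower_bound}, bound its probability by $\binom{m}{s}p^s$, and union-bound over the $N$ rectangles via the contrapositive of Lemma~\ref{lemma:FTEC:life_lower_bound}. In fact you supply more detail than the paper does on the two points it leaves implicit --- the count of the $m = 2n_M + 2n_Dn_M + n_D$ independent fault locations feeding a rectangle and the subset union bound giving the per-rectangle tail estimate --- so nothing is missing.
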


\begin{proof}
For $t \in \N$, let $A_t$ be the set of storage errors such that 
Eq.~\eqref{eqn:lemma:life_lower_bound} fails,
that is $|\varepsilon(t-1)| - |e^0(t-1)| + |\varepsilon(t)| > (d_D-1)/2$.
Denote by $s$ the integer $s = \frac{d_D +1}{2}$ 
The probability of the event $A_t$ is upper bounded as follows.
$$
\Prob_{\stor} ( A_t ) \leq \binom {m}{ s } p^{ s }
$$
where $m = 2 n_M + 2 n_D n_M + n_D$.

The lifetime is shorter than $N$ only if at least one of the events 
$A_t$ occurs with $t < N$. Therefore, a union bound
$$
\Prob_{\stor}( \ell(\varepsilon) < N ) \leq \sum_{t} \Prob_{\stor}( A_t )
$$
proves the proposition.
\end{proof}

The upper bound on the failure probability of Theorem~\ref{theo:time_overhead}
can be made arbitrarily small by selecting a code $C_D$
with large minimum distance under the condition that one can design a 
fault-tolerant decoder and that $m$ grows polynomialy with the minimum distance $d_D$.
This guarantees the exponential decays of the failure rate 
as 
$
O(p^{(d_D+1)/2}).
$
In the rest of this paper, we prove that one can design sequences of 
measurements that satisfy these conditions for a arbitrary codes.

\section{Fault-tolerant decoding} \label{sec:FTDECODING}

The main purpose of this section is to design a fault-tolerant analogue
of the minimum weight error decoder.

Consider first a naive generalization of the minimum weight error
decoder for linear codes. 
Given an outcome $m$, pick a minimum weight circuit error 
$\hat \varepsilon = \tilde D_{\MWE}(m)$ that reaches this outcome. 
The residual data error $\hat \pi = \pi(\hat x)$ could be used as a correction.
Unfortunately, this decoder does not satisfy the fault tolerance definition. 
It attempts to correct some bit flips that occur too late
to be identified, as illustrated by Lemma~\ref{lemma:late_errors_amplification},
resulting amplified data errors. 
In order to make the minimum weight circuit error strategy viable, we 
will restrict the action of the decoder to bit flips that occurs at early 
stages of the measurement sequence.

In this section, we first review standard techniques allowing to correct the
measurement outcome. Then, we focus on the correction of internal
errors. We introduce a notion of distance for the fault-tolerant setting
and we design a fault-tolerant decoder.

\subsection{Correction of measurement error by syndrome encoding} 
\label{subsec:FTDECODING:syndrome_enc}

Assume for now that no internal error occurs, {\em i.e.}
$e^1 = \dots = e^{n_M} = 0$, and focus on correcting
the input error with faulty measurements.
The basic idea is to measure an encoded version of the syndrome
in order to be able to correct measurement error and to estimate
the input error.
This strategy was introduced by Fujiwara \cite{fujiwara2014:syndrome_correction} 
and Ashikhmin {\em et al.} \cite{ashikhmin2014:syndrome_correction, ashikhmin2016:syndrome_correction_LDGM}
in the context of general quantum error-correcting codes. 
It is also a common strategy when working with topological quantum codes
\cite{dennis2002:tqm, campbell2019:single_shot}.
In the classical setting, redundant parity-check matrices 
appear to improve the performance of the belief propagation decoder
as observed with Low Density Parity Check (LDPC) codes based on finite geometry~\cite{kou2001:finite_geometry_LDPC}.
%This section reviews this strategy in the context of fault-tolerance.

\medskip
The data code $C_D$ is given by a $r_D \times n_D$ parity check matrix $H_D$
with $r_D = n_D - k_D$. 
In order to protect the syndrome 
$s = e^0 H_D^T \in \Z_2^{r_D}$ 
of the input error $e^0$, we encode $s$ using a {\em measurement code} $C_M$
with generator matrix $G_M$. 
The measurement code is a $[n_M, k_M, d_M]$ linear code with $k_M = r_D$.
The {\em encoded syndrome} $m(e^0)$ for an input error $e^0$ is then
$
m(e^0) = s G_M = e^0 H_D^T G_M.
$
Equivalently, this is the syndrome of $e^0$ associated with the redundant 
parity check matrix 
$
H_m = G_M^T H_D.
$
We refer to this matrix as the {\em measurement matrix}.
Notice that this scheme respects Postulate~\ref{postulate}.
%Instead of measuring the syndrome bits $s_1, \dots, s_{r_D}$ corresponding to 
%the rows of the parity check matrix $H_D$, we measure the bits 
%$m_1, \dots, m_{n_M}$ of the encoded syndrome that are given by the rows of 
%the measurement matrix. 
If an input error $e^0$ and a measurement error $f$ 
occur, the measurement outcome 
$$
m(e^0, f) = m(e^0) + f =  e^0 H_m^T + f
$$
is obtained.

\medskip
Redundancy in the measurement matrix can be used to correct measurement outcomes.
A measurement code with distance $d_M$ corrects at least $(d_M-1)/2$ bits.
Hamming $[7, 4, 3]$-code can be used in combination with any measurement 
code with dimension $k_M = 3$ to encode the 3-bit syndrome vector. 
One can select the smallest code with $k_M = 3$ that achieves a 
distance $d_M = 3$ or 5 from Grassl's code table 
\cite{Grassl:codetable_article, Grassl06:codetable_algo, Magma, Brouwer98:codetable_v0}.
Generator matrices for these codes are provided in 
Eq.~\eqref{eqn:generator_matrices} and the corresponding 
measurement matrices are
\begin{align} \label{eqn:measurement_matrices_Hamming}
H_{m, 1} = 
\begin{pmatrix}
1 & 0 & 1 & 0 & 1 & 0 & 1 \\ 
0 & 1 & 1 & 0 & 0 & 1 & 1 \\ 
0 & 0 & 0 & 1 & 1 & 1 & 1 \\
1 & 1 & 0 & 0 & 1 & 1 & 0 \\ 
1 & 0 & 1 & 1 & 0 & 1 & 0 \\ 
0 & 1 & 1 & 1 & 1 & 0 & 0
\end{pmatrix}
\quad
\text{ and }
\quad
H_{m, 2} = 
\begin{pmatrix}
1 & 0 & 1 & 0 & 1 & 0 & 1 \\
0 & 1 & 1 & 0 & 0 & 1 & 1 \\
0 & 0 & 0 & 1 & 1 & 1 & 1 \\
1 & 1 & 0 & 1 & 0 & 0 & 1 \\
0 & 1 & 1 & 1 & 1 & 0 & 0 \\ 
1 & 0 & 1 & 1 & 0 & 1 & 0 \\ 
1 & 1 & 0 & 1 & 0 & 0 & 1 \\
0 & 0 & 0 & 1 & 1 & 1 & 1 \\
0 & 1 & 1 & 0 & 0 & 1 & 1 \\
1 & 0 & 1 & 0 & 1 & 0 & 1
\end{pmatrix}
\end{align}
Both matrices define a sequence of measurements for the Hamming code
allowing for the correction of one or two flipped outcomes.
Then the corrected syndrome can be used to correct the data bits.
A larger minimum distance~$d_M$ allows for correcting more measurement errors.

\medskip
In general, it is better to correct both input error and measurement 
error simultaneously instead of sequentially correcting syndrome values 
and then data bits. 
Given an outcome~$m$, one can identify a minimum weight pair~$(e^0, f)$ of 
input error and measurement error that produces the outcome~$m$. 
With this strategy only five measurements suffice (the first five row of $H_{m,1}$) 
to correct a single bit flip either on the input data or on the outcome with the Hamming code. 
The measurement code is a [5,3,2] linear code. 

\medskip
Fujiwara \cite{fujiwara2014:syndrome_correction} 
and Ashikhmin {et al.} \cite{ashikhmin2014:syndrome_correction, ashikhmin2016:syndrome_correction_LDGM} 
designed measurement matrices suited for stabilizer
codes and obtained bounds on the number of measurements required for correcting 
input and measurement errors.
However, in order to make error correction applicable to a realistic setting, we must
also include internal errors.
In the remainder of this paper, we design a fault-tolerant error correction scheme 
that tolerates internal errors at the price of a moderate increase of the number 
of measurements required.

\subsection{Sequential Tanner graph and cluster decomposition}  \label{subsec:FTDECODING:tanner}

The Tanner graph \cite{tanner1981:tanner_graph} is convenient tool 
for designing error-correcting codes and their decoders \cite{richardson2001:ldpc}.
In our context, we associate a {\em sequential Tanner graph} 
with a $n_M \times n_D$ measurement matrix $H_m$.
Figure~\ref{fig:tanner2} shows two representations of a 
circuit error using the sequential Tanner graph for the Hamming code
equipped with the measurement matrix $H_{m,1}$ given in 
Eq.~\eqref{eqn:measurement_matrices_Hamming}.

\medskip
The standard Tanner graph used in classical coding theory encodes the set of 
all the parity check measurements.
Our sequential Tanner graph contains additional information such as the order in 
which measurements are realized.
This information is necessary in order include outcome flips dues to internal errors. 
This Tanner graph can be seen as a sequential version of the Tanner graph used
for instance in the context of topological quantum codes or quantum LDPC codes 
\cite{dennis2002:tqm, kovalev2013:qldpc, gottesman2014:ldpc}
with additional nodes for measurement errors.

\medskip
The sequential Tanner graph generalizes the diagram of Figure~\ref{fig:tanner0}.
There are $n_M+1$ rows of nodes that correspond to $n_M+1$ levels of data errors 
$e^0, e^1, \dots, e^{n_M}$ from top to bottom.
Denote this set of nodes by 
$$
V_D = \{ v_{i, j} \ | \ (i, j) \in [0, n_M] \times [1, n_D] \} \cdot
$$
For $j=1, \dots, n_D$, each node in the sequence 
$v_{0, j}, v_{1, j}, \dots, v_{n_M, j}$ 
is connected to its successor. 
%These nodes form the bit line of bit $j$.
Two consecutive rows $e^{i-1}$ and $e^{i}$ are separated by a row of check nodes (square)
indicating the bits involved in the $i$ th parity check measurement $m_i$.
A node is added at the end of each check node row to mark the measurement outcome
flip. Let 
$$
V_M = \{ u_i \ | \ i \in  [1, n_M] \}
$$ 
be this set of nodes.

\begin{figure}
\centering
\includegraphics[scale=.5]{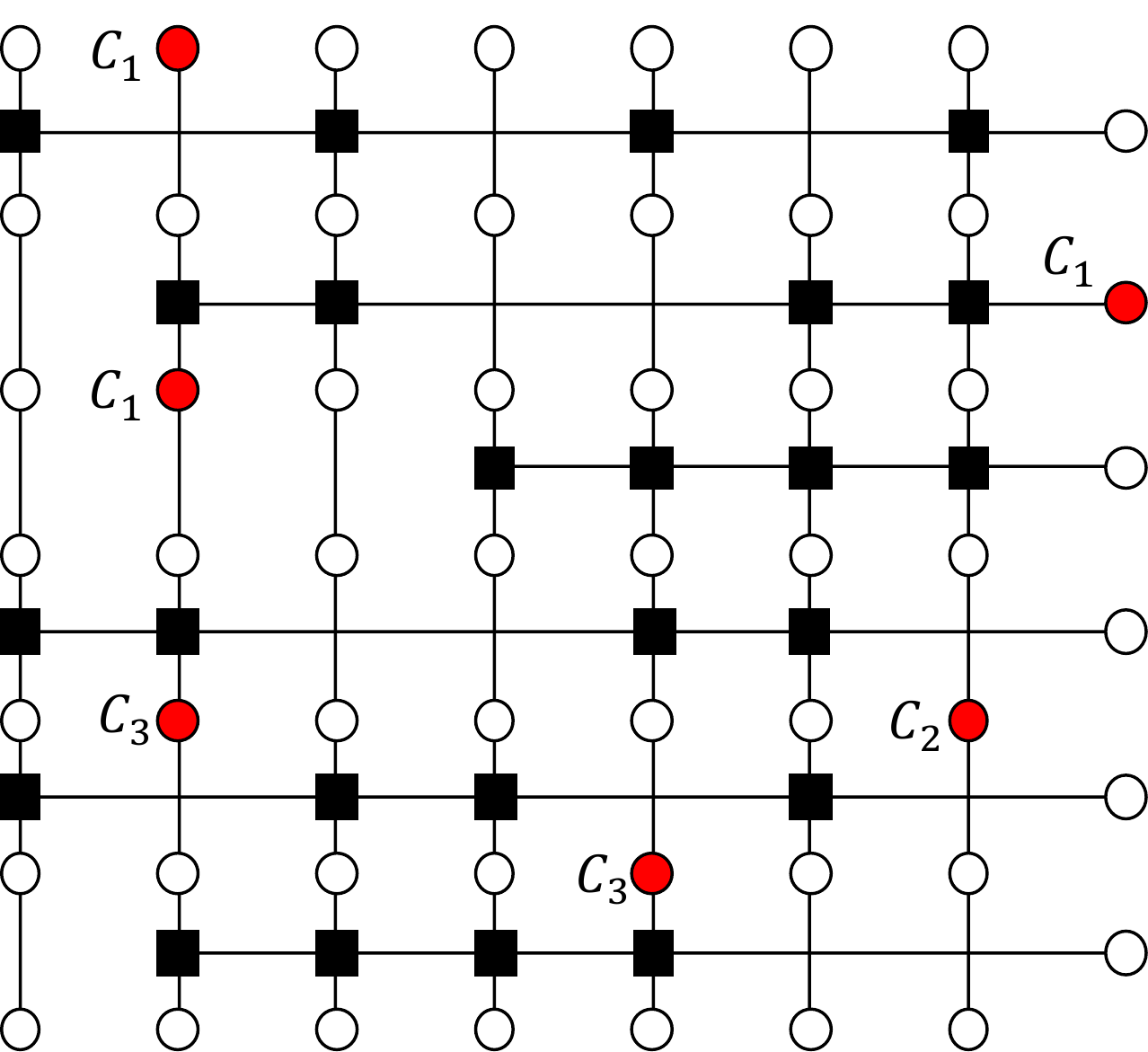}
\hspace{1cm}
\includegraphics[scale=.5]{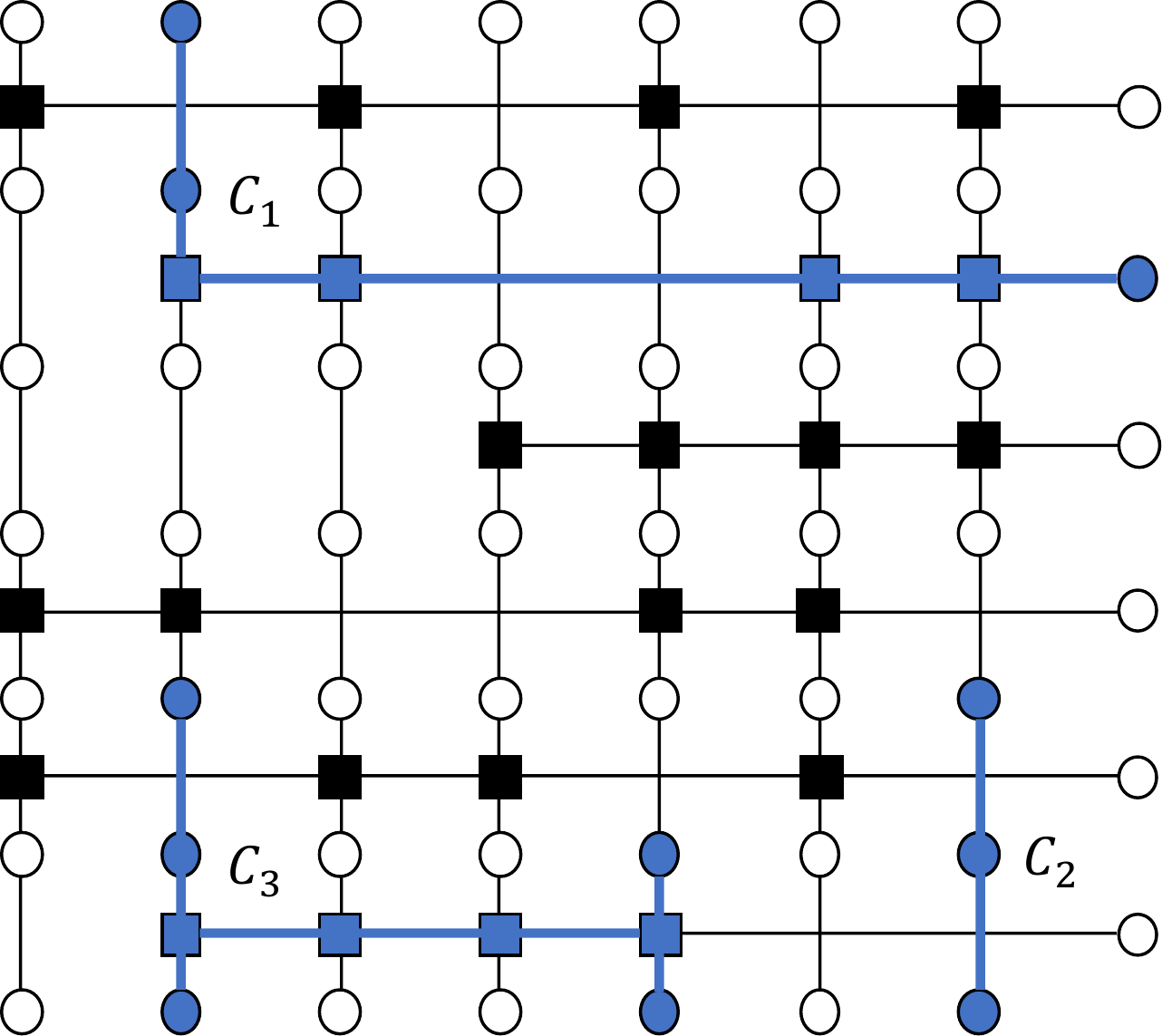}
\caption{A circuit error (red) and the corresponding accumulated error 
(blue) for the Hamming code equipped with a measurement code with 
parameters $[6, 3, 3]$.
The accumulated error has three connected components, which induces
a decomposition of the circuit error into three clusters $C_1, C_2$ and $C_3$.
}
\label{fig:tanner2}
\end{figure}

\medskip
The set of nodes $V = V_D \cup V_M$ is built in such a way that 
each vertex corresponds to a coordinate of a circuit error.
This leads to a one-to-one correspondence between circuit errors $\varepsilon$ 
and subsets $U \subseteq V$ of vertices of the sequential Tanner graph.
An error can be considered alternatively as a vector $\varepsilon = (e, f)$ 
or as a subset $V(\varepsilon) \subseteq V$.
The error $\varepsilon$ whose support is given by $U \subseteq V$
is denoted $\varepsilon(U)$.

\medskip
The sequential Tanner graph provides a graphical framework that 
allows to identify some properties of circuit errors.
Some features of the circuit error are easier to read when considering
the {\em accumulated error} $\bar \varepsilon = (\bar e, f)$ 
defined by
\begin{equation} \label{eqn:accumulated_error}
\bar e^i = \sum_{j=0}^{i} e^j
\end{equation}
for all $i = 0, \dots, n_M$.
The error $\bar e^i$ is the accumulation of all data errors that
appear during the first $i$ measurements.
The residual data error introduced in Section~\ref{subsec:FTEC:ft_decoder} 
is given by $\pi(\varepsilon) = \bar e^{n_M}$.
%Given the accumulated error $\bar \varepsilon$, one can recover 
%$\varepsilon$ as $\varepsilon = \bar {\bar \varepsilon}$.

\medskip
The {\em error graph} induced by a circuit error $\varepsilon$ 
is obtained from the vertex set $V(\varepsilon) \subseteq V$ 
by connecting vertices as follows:
(i) Two consecutive nodes $v_{i, j}$ and $v_{i+1, j}$ in the same column are connected. 
(ii) Two nodes $v_{i, j}$ and $v_{i, k}$ involved in the measurement of
$m_{i+1}$, are connected,
(iii) A node $v_{i,j}$ involved in the measurement of $m_{i+1}$ is connected
to the outcome node $u_{i+1}$.

\medskip
This provides a bijection between circuit errors and error graphs
that allows us to apply the language of graph theory to circuit errors.
A circuit error $\varepsilon$ is said to be {\em connected} if the 
subset $V(\varepsilon)$ induces a connected error graph.
An error $\varepsilon'$ is a {\em connected component} of the circuit error 
$\varepsilon$ if $V(\varepsilon')$ is a connected component of the error graph 
induced by $V(\varepsilon)$.

\medskip
The connected components of the accumulated error $\bar \varepsilon$,
defined in Eq.~\eqref{eqn:accumulated_error},
identify bit flips that trigger the same outcomes. This motivates the 
cluster decomposition that we introduce now.
Let 
$$
V(\bar \varepsilon) = \bigcup_{i \in I} \bar V(\bar \varepsilon_i)
$$
be the decomposition of the accumulated error $\bar \varepsilon$ into connected 
components.
Each component $\bar \varepsilon_i$ is the accumulated error of an error 
$\varepsilon_i$ such that $V(\varepsilon_i) \subseteq V(\varepsilon)$.
The {\em cluster decomposition} of a circuit error $\varepsilon$ is the 
decomposition 
$$
\varepsilon = \sum_{i \in I} \varepsilon_i
$$
derived from the decomposition of the accumulated error $\bar \varepsilon$ 
in connected components.
Figure~\ref{fig:tanner2} shows the cluster decomposition of a circuit error.

\medskip
The {\em input} and {\em output vertices} are 
$$
V_{in}  = \{ v_{0,1}, v_{0,2}, \dots, v_{0, n_D} \} \text{ and } V_{out}  = \{ v_{n_M,1}, v_{n_M,2}, \dots, v_{n_M, n_D} \} 
$$

\medskip
The following lemma justifies the cluster decomposition.

\begin{lemma} \label{lemma:tanner_graph_error}
Let $\varepsilon = \sum_{i \in I} \varepsilon_i$ be the cluster decomposition of a 
circuit error.
\begin{itemize}
\item If $m(\varepsilon) = 0$ then for all $i \in I$ we have $m(\varepsilon_i) = 0$.
\item If $\bar \varepsilon_i \cap V_{out} = \emptyset$ then we have $\pi(\varepsilon_i) = 0$.
\end{itemize}
\end{lemma}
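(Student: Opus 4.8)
The plan is to work entirely with the accumulated error and to exploit the $\F_2$-linearity of the outcome map. Writing the $j$-th outcome as $m(\varepsilon)_j = \bar e^{j-1} \cdot H_m^{(j)} + f_j$, where $H_m^{(j)}$ denotes the $j$-th row of the measurement matrix, shows that $m(\varepsilon)$ depends on the circuit error only through its accumulated error $\bar\varepsilon = (\bar e, f)$, and that this dependence is $\F_2$-linear. Since the cluster decomposition is by definition the decomposition of $\bar\varepsilon$ into the connected components $\bar\varepsilon_i$ of its error graph, these components have pairwise disjoint vertex supports and satisfy $\bar\varepsilon = \sum_{i \in I} \bar\varepsilon_i$. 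Linearity then yields $m(\varepsilon) = \sum_{i \in I} m(\varepsilon_i)$, and this identity drives both parts of the lemma.

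For the first item I would prove that the outcome supports $\Supp(m(\varepsilon_i))$ are pairwise disjoint; the claim is then immediate, since a sum of vectors with disjoint supports vanishes only if each summand vanishes. To establish disjointness, fix a measurement index $j$ and examine which clusters can contribute to the bit $m(\varepsilon_i)_j = \bar e_i^{j-1} \cdot H_m^{(j)} + f_{i,j}$. The data bits contributing to $m_j$ are the level-$(j-1)$ nodes $v_{j-1,k}$ lying in $V(\bar\varepsilon)$ and involved in measurement $m_j$; by connection rule (ii) any two such nodes are adjacent in the error graph, so they all belong to a single connected component. By rule (iii) the outcome node $u_j$, which lies in $V(\bar\varepsilon)$ precisely when $f_j = 1$, is adjacent to each of these data nodes, hence lies in the same component whenever at least one such data bit is active, and otherwise forms an isolated component on its own. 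In every case the active data bits of check $j$ together with the possible flip $f_j$ belong to a single cluster, so at most one $\bar\varepsilon_i$ contributes to coordinate $j$. This connectivity argument is the step I expect to be the crux.

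The second item is essentially a reading of the definitions. The residual data error of a cluster is $\pi(\varepsilon_i) = \bar e_i^{n_M}$, and the support of $\bar e_i^{n_M}$ is exactly the set of output vertices contained in the component, namely $V(\bar\varepsilon_i) \cap V_{out}$. Hence the hypothesis $\bar\varepsilon_i \cap V_{out} = \emptyset$ forces $\bar e_i^{n_M} = 0$, that is $\pi(\varepsilon_i) = 0$. The only nontrivial ingredient is the already-recorded identity $\pi(\varepsilon) = \bar e^{n_M}$; granting it, the implication follows directly.
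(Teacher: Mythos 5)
Your proof is correct and follows essentially the same route as the paper: the first item rests on the observation that all error-graph nodes capable of influencing outcome bit $m_j$ (the level-$(j-1)$ data nodes involved in check $j$ together with $u_j$) are mutually connected and hence lie in a single cluster, and the second item is a direct reading of $\pi(\varepsilon_i)=\bar e_i^{\,n_M}$. The paper states this in two sentences; you have simply filled in the linearity and disjoint-support details it leaves implicit.
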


\begin{proof}
The first item holds because, by construction of the sequential Tanner graph, 
the outcome location $m_j$ is connected to all the bits involved in the measurement $m_j$.
The second item is an immediate application of the definition of the 
accumulated error because $\pi(\varepsilon)$ is equal to the accumulated 
error $\bar e^{n_M}$.
\end{proof}

\medskip
The graphical formalism introduced in this section provides a
decomposition of circuit errors and Lemma~\ref{lemma:tanner_graph_error} 
identifies the clusters that contributes to the residual data error.

\subsection{Correction of input error and circuit distance} \label{subsec:FTDECODING:circuit_distance}

The fault tolerance condition introduced in Definition~\ref{def:fault_tolerant_decoder}
can be interpreted as the fact that the decoder corrects the input error
without amplifying internal errors.
This section deals with the correction of the input error ignoring the problem of 
error amplification. 
We introduce a notion of minimum distance $d_{circ}$ adapted to 
the context of fault tolerance and we prove that we can correct the 
input error for any circuit error of weight at most $(d_{circ}-1)/2$.
In Section~\ref{subsec:FTDECODING:truncated_decoder}, we
adapt the decoder in order to keep error 
amplification limited and to satisfy the fault tolerance condition.

\medskip
Given an outcome $m$, denote by 
$\hat \varepsilon = \tilde D_{\MWE}(m)$
a minimum weight circuit error with outcome $m$.
We consider a MWE decoder that returns an estimation
$D_{\MWE}(m) = \pi(\hat \varepsilon)$
of the residual error

\medskip
Naively, for a circuit error $\varepsilon = (e, f)$ with outcome $m(\varepsilon) = m$, 
one could say that the input error is corrected by the MWE decoder if the estimation 
$\hat \varepsilon = (\hat e, \hat f)$ 
satisfies 
$
\hat e^0 = e^0,
$
that is if the input component $e^0$ is correctly estimated. 
This definition is not satisfying because some input errors may be 
indistinguishable from internal errors.
To clarify this point, we introduce the set of trivial errors.
A {\em trivial circuit error} is a circuit error $\varepsilon$ such that 
$m(\varepsilon) = 0$ and $\pi(\varepsilon) = 0$.
This error is impossible to detect since the corresponding
outcome is trivial and it does not induce any bit flip on the data
at the end of the measurement circuit. 
Two circuit errors that differ in a trivial error cannot be distinguished
using the outcome observed or the data bits after measurement.
This notion of equivalence can be seen as a special case of the 
gauge equivalence introduced by Bacon {\em et al.} \cite{bacon2017:QLDPC_from_circuit}
in order to design quantum LDPC codes from a quantum circuit.

\begin{figure}
\centering
\includegraphics[scale=.5]{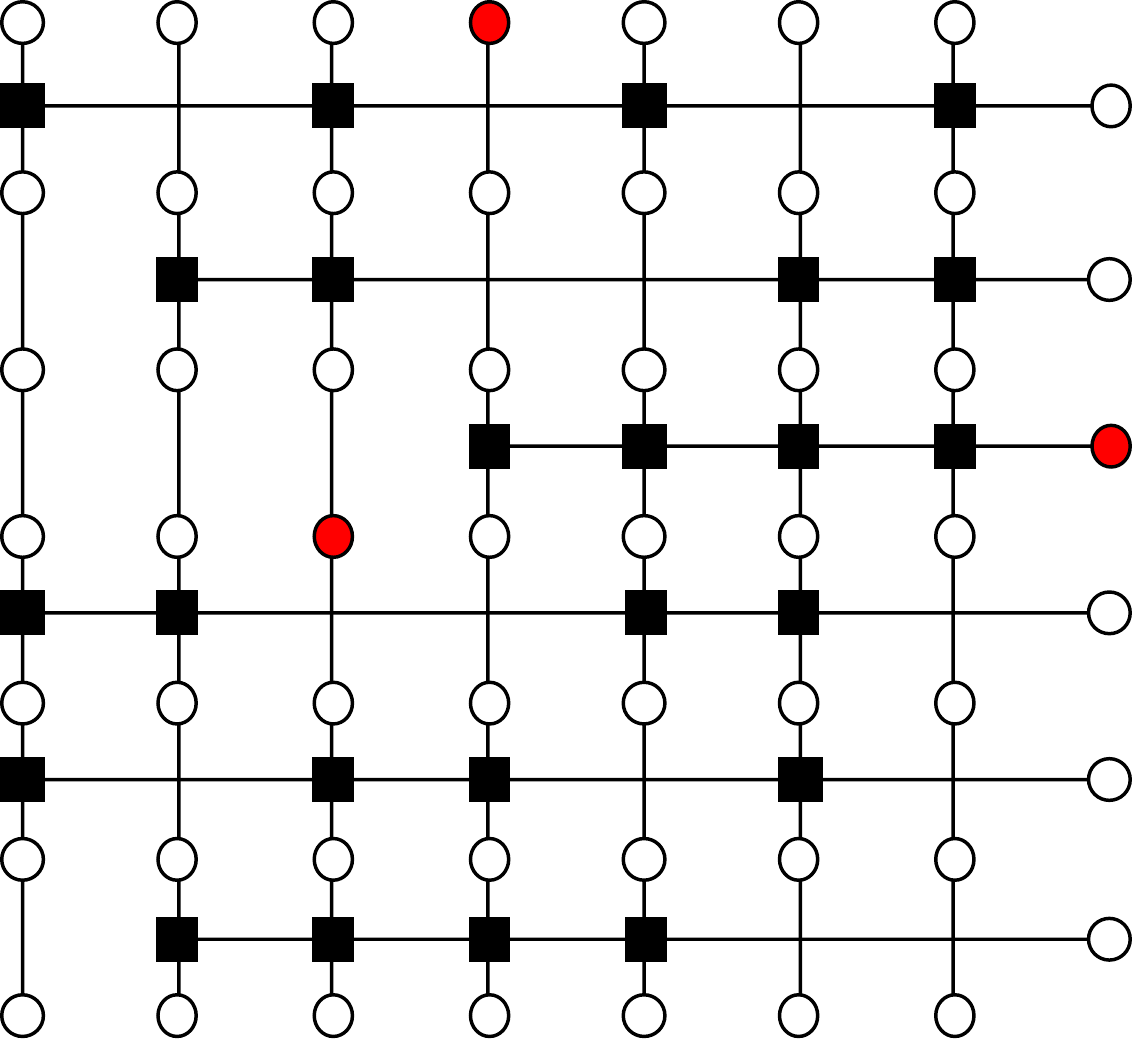}
\hspace{1cm}
\includegraphics[scale=.5]{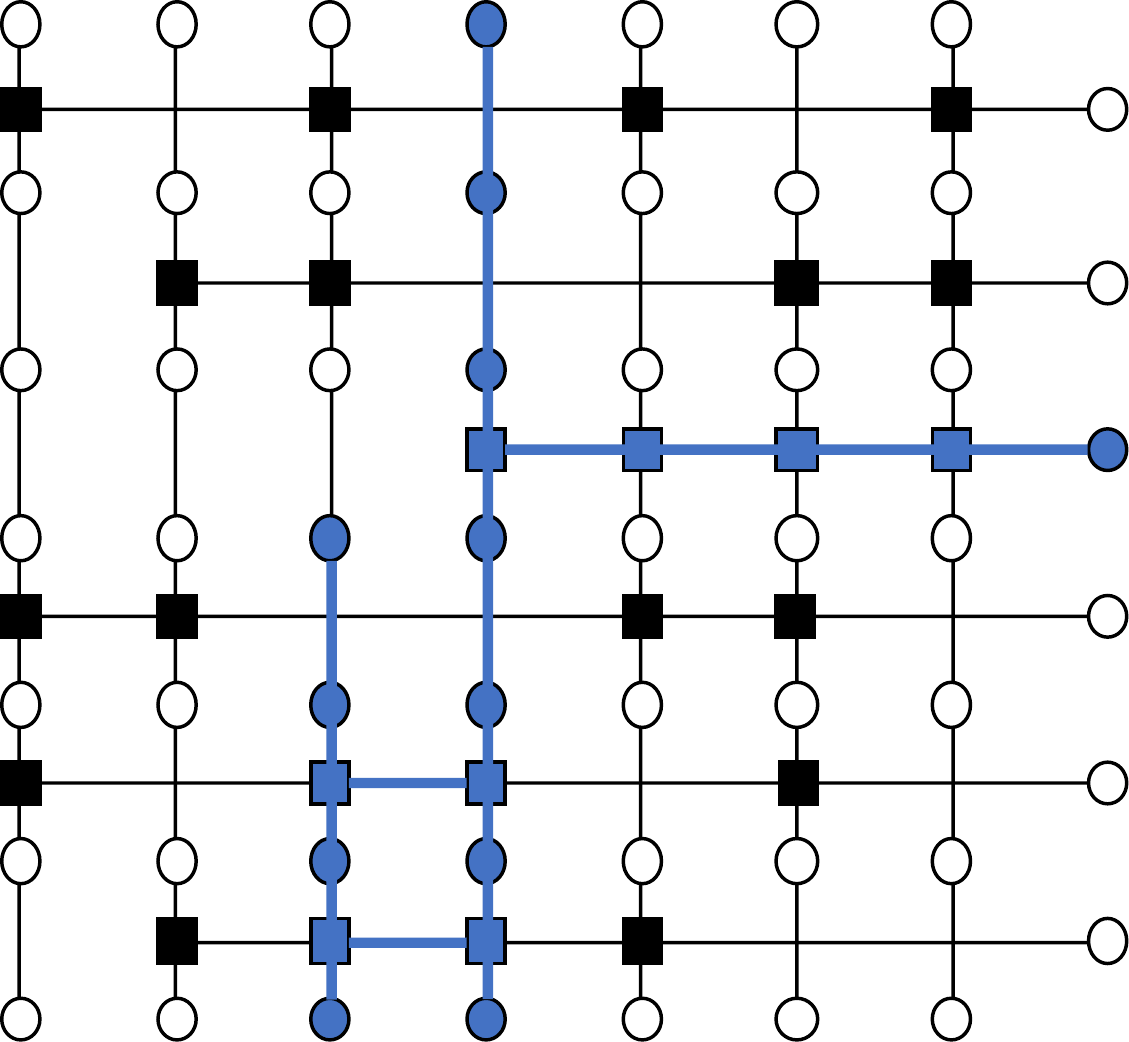}
\caption{A minimum weight circuit error (red) and the corresponding accumulated error (blue) for the Hamming code equipped with a measurement code with parameters $[6, 3, 3]$.
The accumulated error connects an input vertex (top row) with 
an output vertex (bottom row), which means that the circuit error is a propagating error.
Note that the third measurement does not detect the bit flip on the fourth bit 
because of the measurement error.
}
\label{fig:propagating_error}
\end{figure}

\medskip
Our definition of the correction of the input error relies on the notion of 
propagating error that we introduce now.
A {\em propagating error} is defined to be a circuit error $\varepsilon$ with trivial 
outcome $m(\varepsilon) = 0$ such that $V(\bar \varepsilon)$ contains a path connecting 
$V_{in}$ and $V_{out}$.
%A propagating error connects the input of the sequential Tanner graph
%with its output.
It can be interpreted as an input error that propagates through the 
measurement circuit without being detected.
Figure~\ref{fig:propagating_error} shows a propagating error
for the Hamming code.
If an error $\varepsilon$ occurs with outcome $m$,
we say that the MWE decoder {\em corrects the input error} 
if $\varepsilon + \tilde D_{\MWE}(m)$ is not a propagating error.
Since this circuit error is guaranteed to have a trivial outcome,
that means that it does not connect input and output sets of vertices.

\medskip
The {\em circuit distance} $d_{circ}$ is defined to be the minimum weight of 
a propagating error.
$$
d_{circ} = \min \{ |\varepsilon|  \text{ such that } \varepsilon \text{ is propagating} \} \cdot
$$
A propagating error is undetectable in the sense that $m(\varepsilon) = 0$ 
and non-trivial, however all undetectable non-trivial errors are not propagating errors.
For instance, the circuit distance of the Hamming code combined with the $[6, 3, 3]$ 
measurement code is three.
A minimum weight propagating error is represented in 
Figure~\ref{fig:propagating_error}.

\medskip
We recalled in Section~\ref{subsec:FTEC:error_correction} that in the standard coding 
theory setting the minimum distance provides an indication on the performance 
of the minimum weight error decoder. Any set of up to $(d-1)/2$ bit flips can be corrected
by MLE decoding.
The following proposition establishes a fault-tolerant analog of this result.

\begin{prop} \label{prop:MLE_dcirc}
For any circuit error $\varepsilon$ such that 
$
|\varepsilon| \leq (d_{circ}-1)/2
$
the MWE decoder $D_{\MWE}$ corrects the input error.
\end{prop}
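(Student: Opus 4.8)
The plan is to transpose to the fault-tolerant setting the classical fact that a minimum-weight decoder corrects every error of weight at most $(d-1)/2$, with the circuit distance $d_{circ}$ now playing the role of the minimum distance. The whole argument rests on one structural observation that I would record first: both the outcome map $\varepsilon \mapsto m(\varepsilon)$ and the residual map $\varepsilon \mapsto \pi(\varepsilon)$ are $\Z_2$-linear. For the residual this is immediate from $\pi(\varepsilon) = \sum_{i=0}^{n_M} e^i$. For the outcome, the $j$th measured bit is the parity of the accumulated error $\bar e^{\,j-1} = \sum_{i<j} e^i$ over the support of the $j$th row of $H_m$, plus the measurement flip $f_j$; since $\bar e^{\,j-1}$ depends linearly on $\varepsilon$, so does $m(\varepsilon)$. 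Consequently $m$ is additive over the group $\Z_2^{(n_D+1)(n_M+1)-1}$ of circuit errors.

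Given the actual circuit error $\varepsilon$ with $|\varepsilon| \leq (d_{circ}-1)/2$ and outcome $m = m(\varepsilon)$, write $\hat\varepsilon = \tilde D_{\MWE}(m)$ for the minimum-weight circuit error produced by the decoder. The key step is the weight bound on the discrepancy $\delta = \varepsilon + \hat\varepsilon$. Since $\varepsilon$ is itself a circuit error with outcome $m$ and $\hat\varepsilon$ is of minimum weight among all such errors, we have $|\hat\varepsilon| \leq |\varepsilon|$. The triangle inequality for the Hamming weight then gives $|\delta| \leq |\varepsilon| + |\hat\varepsilon| \leq 2|\varepsilon| \leq d_{circ} - 1$. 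By linearity of the outcome map, $m(\delta) = m(\varepsilon) + m(\hat\varepsilon) = m + m = 0$, so $\delta$ is an undetectable (trivial-outcome) circuit error of weight strictly less than $d_{circ}$.

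To finish, I would invoke the definition of $d_{circ}$ as the minimum weight of a propagating error: any propagating error has weight at least $d_{circ}$, whereas $|\delta| < d_{circ}$, so $\delta = \varepsilon + \tilde D_{\MWE}(m)$ cannot be propagating. By the definition of correcting the input error, this is exactly the conclusion of the proposition.

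The routine parts here, namely the triangle inequality and the minimality of $\hat\varepsilon$, are standard. The step that deserves the most care, and which I see as the only real obstacle, is the justification that the outcome map is genuinely linear in the full circuit error, including the internal errors $e^1, \dots, e^{n_M}$: one must check that an internal error injected after measurement $i$ feeds into every later measurement exactly through the accumulated errors $\bar e^{\,i}$, with no timing effect that would break additivity. Once this linearity is secured, the proposition reduces to the classical minimum-distance argument verbatim, and no properties of the measurement code $C_M$ beyond the value of $d_{circ}$ are needed.
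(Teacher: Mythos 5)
Your proof is correct and follows essentially the same route as the paper's: minimality of $\hat\varepsilon$ plus the triangle inequality give $|\varepsilon + \hat\varepsilon| \leq d_{circ}-1$, so the discrepancy cannot be a propagating error, which is exactly the definition of correcting the input error. The only difference is that you spell out the linearity of the outcome map to justify $m(\varepsilon + \hat\varepsilon) = 0$, a detail the paper leaves implicit.
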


\begin{proof}
Assume that a circuit error $\varepsilon$ with weight $|\varepsilon| \leq (d_{circ}-1)/2$ 
occurs.
The MWE decoder is based on the estimation $\hat \varepsilon = \tilde D_{\MWE}(m(\varepsilon))$ 
of the circuit error $\varepsilon$.
By definition, it satisfies $|\hat \varepsilon| \leq |\varepsilon| \leq (d_{circ}-1)/2$,
which implies $|\varepsilon + \hat \varepsilon| \leq d_{circ} - 1$.
This proves that the residual circuit error $\varepsilon + \hat \varepsilon$ cannot 
be a propagating error.
The input error is corrected by the MWE decoder.
\end{proof}

\medskip
The circuit distance cannot be arbitrarily large. 
It is limited by the minimum distance $d_D$ of the data code 
and the minimum distance $d_M$ of the measurement code as
$$
d_{circ} \leq \min \{ d_D, n_D + d_M \} \cdot
$$
Indeed, to obtain the upper bound $d_{circ} \leq d_D$ remark that for any 
codeword $u \in C_D$, the circuit error $\varepsilon = (e, f)$ with
input $e^0 = u$ and with $e^1 = \dots = e^{n_M} = f = 0$ is a 
propagating error.
One can also build a propagating error out of an arbitrary input 
error $e^0$ using $f = e^0 H_m^T$.
The second upper bound $d_{circ} \leq n_D + d_M$ follows.

\medskip
Given a data code $C_D$, one can try to select a measurement code
$C_M$ with optimal circuit distance $d_{circ} = d_D$ that requires
a minimum number of parity check measurements $n_M$.
We obtain a circuit distance $d_{circ} = d_D = 3$ for the Hamming code
using the linear codes $[6, 3, 3]$ or $[10, 3, 5]$ defined in 
Eq.~\eqref{eqn:generator_matrices} as a measurement code.
The circuit distance can be larger than the measurement code minimum distance.
The linear code $[5, 3, 2]$ with generator matrix 
$$
G = 
\begin{pmatrix}
1 & 0 & 0 & 1 & 1 \\
0 & 1 & 0 & 1 & 0 \\
0 & 0 & 1 & 1 & 0
\end{pmatrix}
$$
leads to a circuit distance $d_{circ}=3$ for the Hamming code and it requires 
only 5 measurements.

\medskip
As a second example, consider using as data code the BCH code $[15, 7, 5]$ with generator
matrix 
$$
G_D = 
\left(
\begin{array}{c c c c c c c c c c c c c c c}
1 & 0 & 0 & 0 & 0 & 0 & 0 & 1 & 0 & 0 & 0 & 1 & 0 & 1 & 1 \\
0 & 1 & 0 & 0 & 0 & 0 & 0 & 1 & 1 & 0 & 0 & 1 & 1 & 1 & 0 \\
0 & 0 & 1 & 0 & 0 & 0 & 0 & 0 & 1 & 1 & 0 & 0 & 1 & 1 & 1 \\
0 & 0 & 0 & 1 & 0 & 0 & 0 & 1 & 0 & 1 & 1 & 1 & 0 & 0 & 0 \\
0 & 0 & 0 & 0 & 1 & 0 & 0 & 0 & 1 & 0 & 1 & 1 & 1 & 0 & 0 \\
0 & 0 & 0 & 0 & 0 & 1 & 0 & 0 & 0 & 1 & 0 & 1 & 1 & 1 & 0 \\
0 & 0 & 0 & 0 & 0 & 0 & 1 & 0 & 0 & 0 & 1 & 0 & 1 & 1 & 1
\end{array}
\right) \cdot
$$
Searching over random generator matrices $G_M$, we found a
measurement code with length $n_M = 16$ that leads to an optimal circuit 
distance $d_{circ} = d_D = 5$. It is defined by the generator matrix 
$$
G_M = 
\left(
\begin{array}{c c c c c c c c c c c c c c c c}
1 & 0 & 0 & 0 & 1 & 1 & 1 & 0 & 0 & 0 & 0 & 1 & 1 & 0 & 0 & 0 \\
1 & 1 & 1 & 0 & 0 & 1 & 0 & 0 & 1 & 0 & 0 & 0 & 0 & 1 & 1 & 0 \\
1 & 1 & 1 & 1 & 1 & 0 & 1 & 0 & 0 & 1 & 0 & 0 & 0 & 1 & 0 & 1 \\
1 & 0 & 0 & 1 & 0 & 0 & 1 & 0 & 0 & 1 & 0 & 0 & 1 & 0 & 1 & 1 \\
0 & 0 & 1 & 0 & 0 & 1 & 0 & 1 & 1 & 0 & 1 & 1 & 1 & 1 & 0 & 0 \\
1 & 1 & 1 & 1 & 1 & 1 & 0 & 1 & 1 & 0 & 1 & 1 & 0 & 0 & 0 & 1 \\
1 & 0 & 0 & 0 & 0 & 0 & 1 & 0 & 1 & 1 & 0 & 1 & 0 & 0 & 1 & 0 \\
1 & 0 & 1 & 1 & 0 & 0 & 0 & 1 & 1 & 0 & 1 & 0 & 1 & 1 & 0 & 0
\end{array}
\right)
$$

\subsection{Truncated Minimum Weight Error decoder}  
\label{subsec:FTDECODING:truncated_decoder}

We saw that the MWE decoder can be generalized to the context 
of circuit errors by selecting a circuit error $\varepsilon$ with minimum weight 
that yields the observed outcome $m$. 
Then $\pi(\varepsilon)$ provides an estimation of the residual data error that occurs.
Unfortunately, this strategy fails to satisfy the fault tolerance condition of Def.~\ref{def:fault_tolerant_decoder} 
due to the issue of error amplification illustrated by Lemma~\ref{lemma:late_errors_amplification}.
Some internal errors occur too late to be corrected safely.
This motivates the introduction of the {\em truncated} minimum weight error 
decoder.

\begin{figure}
\centering
\includegraphics[scale=.5]{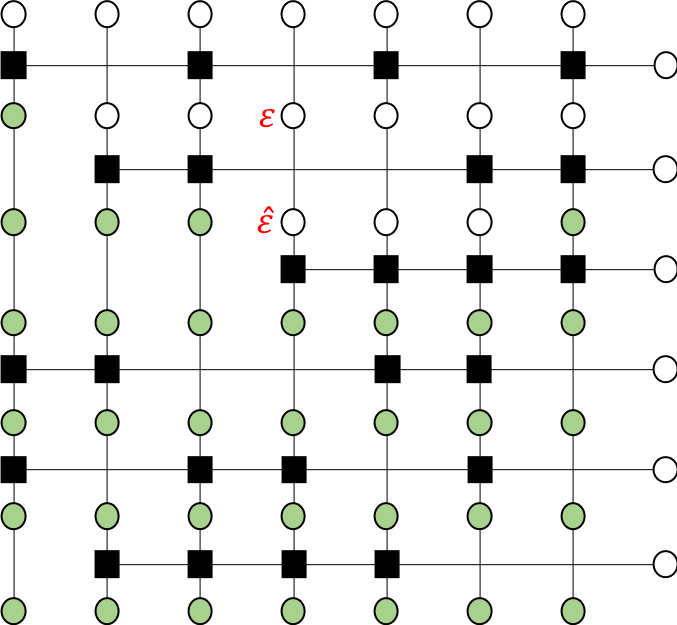}
\hspace{1cm}
\includegraphics[scale=.5]{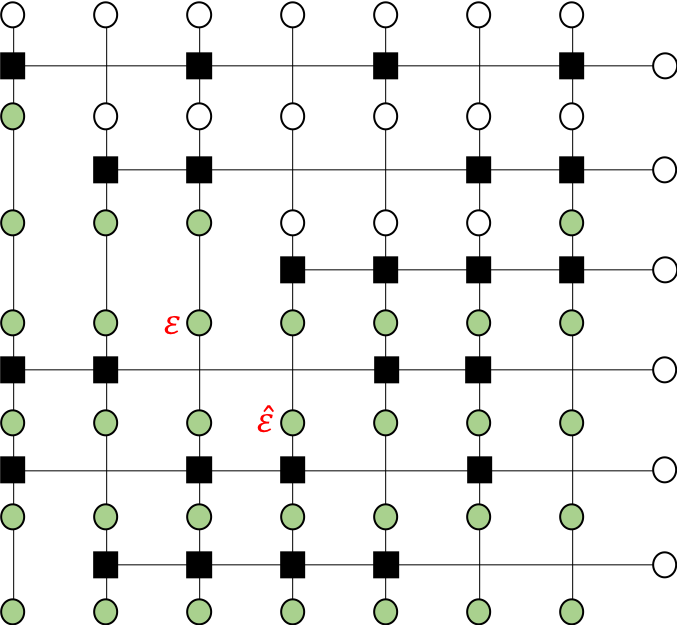}
\caption{Left: A weight-one error $\varepsilon$ and its estimation $\hat \varepsilon$. 
The correction succeeds although the circuit error is not exactly identified since 
no residual error remain at the end of the measurement cycle.
Right: The decoder fails leading to an amplified residual error with weight two. 
To make the MWE decoder fault-tolerant, we will discard the part 
of $\hat \varepsilon$ that is included in the green region $S_{out}$.
}
\label{fig:mwe_decoder_failure}
\end{figure}

\medskip
In order to make the definition of the truncated decoder more intuitive, 
we begin with a case of failure of the minimum weight error decoder illustrated
with Figure~\ref{fig:mwe_decoder_failure}.
An internal bit flip may be amplified by the decoder if it is included in the support of
a weight-two undetectable error with a non-trivial residual error.
To avoid error amplification, we will correct $\varepsilon$ with the restriction 
of $\hat \varepsilon$ to a subset of early bit flip locations. 
In the rest of this section, we determine the exact shape of the restriction. 

\medskip
Let $A \subseteq V$ be a subset of vertices of the sequential Tanner graph.
Let
$
\tilde D_{\MWE}^A
$
be the map defined by $\tilde D_{\MWE}^A(m) = \hat \varepsilon \cap A$ where 
$\hat \varepsilon$ is a minimum weight circuit error with outcome $m$.
We use the notation $\hat \varepsilon \cap A$ as a shorthand for the restriction of the support
of $\varepsilon$ to the set $A$ that is $\hat \varepsilon \cap A = \varepsilon( V(\hat \varepsilon) \cap A)$.
The {\em truncated $\MWE$ decoder} with support $A$ is defined to be
the map 
$
D_{\MWE}^A: \Z_2^{n_M} \rightarrow \Z_2^{n_D}
$
such that 
$$
D_{\MWE}^A(m) = \pi ( \tilde D_{\MWE}^A(m) ) \cdot
$$
For $A = V$, we recover the strategy considered in the previous section,
that is $\tilde D_{\MWE}^V = \tilde D_{\MWE}$.
In the general case, the truncated decoder ignores the bit flips supported 
outside of the subset $A$.
Without loss of generality, we can assume that 
$\tilde D_{\MWE}^V(m) = \hat \varepsilon$ is fixed and that 
$\tilde D_{\MWE}^A(m) = \tilde D_{\MWE}^V \cap A$
for any subset $A$ of $V$.

\begin{figure}
\centering
\includegraphics[scale=.5]{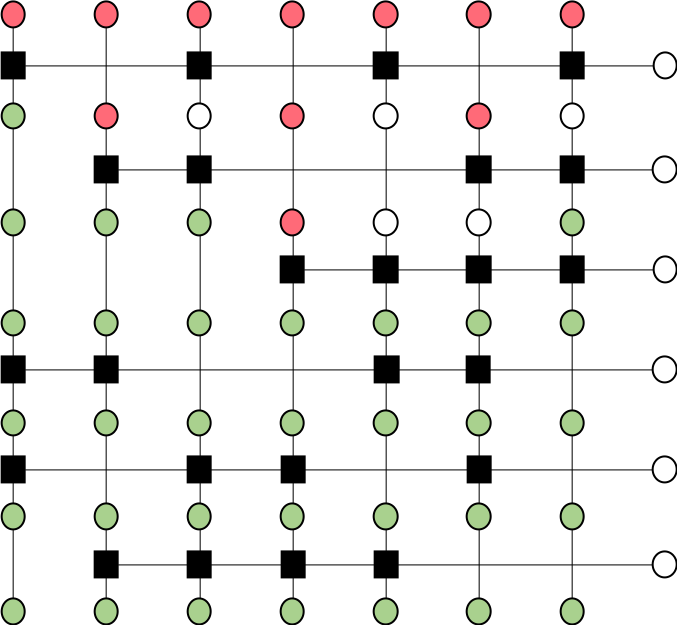}
\caption{The set $V_{in} \cup S_{in}$ (red) and $S_{out}$ (green) do not overlap for the
Hamming code combined with the measurement code $[6,3,3]$.
The restriction of the MWE decoder to the set $A = S_{out}^C$ is fault-tolerant 
by Theorem~\ref{theo:truncated_MWE}.
}
\label{fig:tanner_in_out}
\end{figure}

\medskip
Let $S_{in} \subseteq V$ be the union of the supports of all connected circuit errors $\varepsilon$
with weight up to $d_{D}-1$ such that $\varepsilon \cap V_{in} \neq \emptyset$.
Define $S_{out} \subseteq V$ as the union of the supports of all connected circuit errors $\varepsilon$
with weight up to $d_{D}-1$ such that $\bar \varepsilon \cap V_{out} \neq \emptyset$.

\begin{theo} \label{theo:truncated_MWE}
If $( V_{in} \cup S_{in} ) \cap S_{out} = \emptyset$, then the truncated decoder
$D_{\MWE}^A$ with $A = S_{out}^C$ is a fault-tolerant decoder.
\end{theo}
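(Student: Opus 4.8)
The plan is to verify Definition~\ref{def:fault_tolerant_decoder} directly. Fix a circuit error $\varepsilon = (e,f)$ with $|\varepsilon| \le (d_D-1)/2$, and let $\hat\varepsilon = \tilde D_{\MWE}(m(\varepsilon))$ be the minimum weight circuit error reaching the observed outcome, so that $|\hat\varepsilon| \le |\varepsilon|$. The truncated estimate is $\eta = \hat\varepsilon \cap S_{out}^C$ and $\hat\pi = \pi(\eta)$. Since $\pi$ is linear in the support, the quantity to bound is $|\pi(\varepsilon) + \hat\pi| = |\pi(\varepsilon + \eta)|$. First I would rewrite $\varepsilon + \eta = \gamma + (\hat\varepsilon \cap S_{out})$ with $\gamma := \varepsilon + \hat\varepsilon$, so that $\pi(\varepsilon) + \hat\pi = \pi(\gamma) + \pi(\hat\varepsilon \cap S_{out})$.

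The core object is $\gamma$. It has trivial outcome (both $\varepsilon$ and $\hat\varepsilon$ produce $m$) and weight $|\gamma| \le 2|\varepsilon| \le d_D - 1$. I would cluster-decompose $\gamma = \sum_c \gamma_c$ through its accumulated error and apply Lemma~\ref{lemma:tanner_graph_error}: every cluster again has trivial outcome, a cluster with $\bar\gamma_c \cap V_{out} = \emptyset$ contributes $\pi(\gamma_c) = 0$, and a cluster whose accumulated error reaches $V_{out}$ is a connected circuit error of weight at most $d_D - 1$, hence $V(\gamma_c) \subseteq S_{out}$. Call the latter clusters \emph{terminal}. Thus $\pi(\gamma) = \sum_{\text{terminal}} \pi(\gamma_c)$ is supported inside $S_{out}$, and so is the whole residual $\pi(\varepsilon) + \hat\pi$.

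Two ingredients then finish the bound. Minimality of $\hat\varepsilon$: for each cluster $\gamma_c$, the error $\hat\varepsilon + \gamma_c$ still has outcome $m$, so $|\hat\varepsilon + \gamma_c| \ge |\hat\varepsilon|$ forces $|\gamma_c \cap \hat\varepsilon| \le |\gamma_c \cap \varepsilon|$; within each cluster the $\varepsilon$-part is at least as heavy as the $\hat\varepsilon$-part, giving a cluster-by-cluster injection of $\hat\varepsilon$-vertices into $\varepsilon$-vertices. The hypothesis $(V_{in} \cup S_{in}) \cap S_{out} = \emptyset$: any cluster meeting $V_{in}$ is a connected weight-$\le d_D-1$ error touching $V_{in}$, hence lies in $S_{in}$ and therefore avoids $S_{out}$; contrapositively, every cluster meeting $S_{out}$ avoids $V_{in}$, so its $\varepsilon$-part is disjoint from the input support $V(e^0) \subseteq V_{in}$. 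I would then expand the residual support as the disjoint union of the $\varepsilon$-parts of terminal clusters, the vertices common to $\varepsilon$ and $\hat\varepsilon$ inside $S_{out}$, and the $\hat\varepsilon$-parts of the zero-residual clusters that leak into $S_{out}$; replacing the last block by $\varepsilon$-vertices through the cluster-wise injection (which, by the previous point, never meet $e^0$), one obtains an injection of the residual support into $V(\varepsilon) \setminus V(e^0)$. This yields $|\pi(\varepsilon) + \hat\pi| \le |\varepsilon| - |e^0|$, which is exactly~\eqref{eq:ft_condition}.

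The main obstacle is this last bookkeeping step: assembling the residual support from the three blocks, checking that they are genuinely disjoint, and verifying that the minimality injection never lands on an input vertex. This is precisely where the disjointness hypothesis is indispensable, since it forbids a single low-weight cluster from simultaneously touching the input region $V_{in} \cup S_{in}$ and the output region $S_{out}$; without it such a cluster could propagate from input to output and the $-|e^0|$ saving would be lost. A secondary technical point I would check carefully is the reconciliation of the two notions of connectivity in play—connected components of the accumulated error, which define the clusters, versus connected circuit errors, which define $S_{in}$ and $S_{out}$—so that each terminal or input-meeting cluster legitimately certifies membership in $S_{out}$ or $S_{in}$.
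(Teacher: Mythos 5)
Your proposal is correct and follows essentially the same route as the paper's proof: the paper also works with the difference error $\omega=\varepsilon+\hat\varepsilon$ (your $\gamma$), cluster-decomposes it via the accumulated error, applies Lemma~\ref{lemma:tanner_graph_error} together with the per-cluster minimality of $\hat\varepsilon$ (its Lemma~\ref{lemma:local_MWE}), and uses the hypothesis exactly as you do to place every cluster touching $V_{in}$ inside $S_{in}\subseteq S_{out}^C$ so that the $-|e^0|$ saving survives. The only cosmetic difference is bookkeeping (the paper first splits off the set where $\varepsilon$ and $\hat\varepsilon$ agree and bounds it separately, while you fold it into one accounting), and the connectivity subtlety you flag is present, and likewise untreated, in the paper's own argument.
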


In what follows, when we refer to the truncated MWE decoder, we assume that 
the support of the truncated decoder is $A = S_{out}^C$.
The condition $V_{in} \cap S_{out} = \emptyset$ is equivalent to $d_{circ} = d_D$.
A large circuit distance is therefore required in order to ensure fault tolerance.

\begin{proof}
Consider an error $\varepsilon = (e, f)$ with outcome $m$ such that 
$|\varepsilon| \leq (d_D-1)/2$ and denote by
$$
\hat \pi^A = \pi( \hat \varepsilon \cap A )
$$
the residual error estimation returned by the truncated MWE decoder 
where $A = S_{out}^C$.

\medskip
We are interested in the residual data error after correction,
{\em i.e.}
$$
\pi(\varepsilon) + \hat \pi^A = \pi( \varepsilon + \hat \varepsilon \cap A ) \cdot
$$
Let us prove that it satisfies the fault tolerance condition
$
| \pi( \varepsilon + \hat \varepsilon \cap A )|  \leq |\varepsilon| - |e^0|.
$

\medskip
{\bf Partition of the circuit error:}
Denote $\omega = \varepsilon + \hat \varepsilon$.
The set $V(\omega)$ is the set of locations where $\varepsilon$
and its estimation $\hat \varepsilon$ do not match.
We will prove the fault tolerance condition in two steps through the partition 
$V = V(\omega) \cup V(\omega)^C$.

\medskip
Denote 
$\varepsilon \cap V(\omega) = (e_1, f_1)$ 
and 
$\varepsilon \cap V(\omega)^C = (e_2, f_2)$ 
the two components of $\varepsilon$.
It is enough to show that both components satisfy the fault tolerance constraint, 
that is
\begin{align}
| \pi( \varepsilon \cap V(\omega) + \hat \varepsilon \cap V(\omega) \cap A )| \leq |\varepsilon \cap V(\omega)| - |e_1^0| \label{eq:theo_proof_cases1}
\end{align} 
and
\begin{align}
| \pi( \varepsilon \cap V(\omega)^C + \hat \varepsilon \cap V(\omega)^C \cap A )| \leq |\varepsilon \cap V(\omega)^C| - |e_2^0| \label{eq:theo_proof_cases2}
\end{align} 
Assuming that
Eqs.~\eqref{eq:theo_proof_cases1} and \eqref{eq:theo_proof_cases2} are satisfied, we obtain
the fault tolerance condition as follows:
\begin{align*}
| \pi( \varepsilon + \hat \varepsilon \cap A )|
	& = | \pi( (\varepsilon + \hat \varepsilon \cap A) \cap V(\omega) + (\varepsilon + \hat \varepsilon \cap A) \cap V(\omega)^C )| \\
	& \leq | \pi( (\varepsilon + \hat \varepsilon \cap A) \cap V(\omega) )| + |\pi( (\varepsilon + \hat \varepsilon \cap A) \cap V(\omega)^C )| \\
	& = | \pi( \varepsilon \cap V(\omega) + \hat \varepsilon \cap V(\omega) \cap A )|  
		+  | \pi( \varepsilon \cap V(\omega)^C + \hat \varepsilon \cap V(\omega)^C \cap A )| \\
	& \leq |\varepsilon \cap V(\omega)| - |e_1^0| + |\varepsilon \cap V(\omega)^C| - |e_2^0| \\
	& = |\varepsilon| - |e^0|
\end{align*}
In the remainder of the proof, we demonstrate Eq.~\eqref{eq:theo_proof_cases1} 
and Eq.~\eqref{eq:theo_proof_cases2}.

\medskip
{\bf Proof of Eq.~\eqref{eq:theo_proof_cases2}:}
By definition, the set 
$V(\omega)^C$ is the subset of $V$ over which $\varepsilon$ and $\hat \varepsilon$ 
coincide, {\em i.e.} 
$\varepsilon \cap V(\omega)^C = \hat \varepsilon \cap V(\omega)^C$. 
Consequently,
\begin{align*}
\varepsilon \cap V(\omega)^C + \hat \varepsilon \cap V(\omega)^C \cap A 
	& = \varepsilon \cap V(\omega)^C \cap A^C
\end{align*}
which produces
\begin{align*}
| \pi( \varepsilon \cap V(\omega)^C + \hat \varepsilon \cap V(\omega)^C \cap A ) | 
	& \leq | \pi( \varepsilon \cap V(\omega)^C \cap A^C ) | \\
	& \leq | \varepsilon \cap V(\omega)^C \cap A^C | \\
	& \leq |\varepsilon \cap V(\omega)^C| - |e_2^0|
\end{align*}
where the last inequality exploits the fact that $A^C = S_{out}$
does not intersect $V_{in}$.
This proves Eq.~\eqref{eq:theo_proof_cases2}.

\medskip
{\bf Proof of Eq.~\eqref{eq:theo_proof_cases1}:}
Consider the cluster decomposition $\omega = \sum_{i \in I} \omega_i$ of $\omega$
and denote by 
$\varepsilon_i = \varepsilon \cap V(\omega_i)$ 
and 
$\hat \varepsilon_i = \hat \varepsilon \cap V(\omega_i)$.
Since $\omega = \varepsilon + \hat \varepsilon$, we have $\omega_i = \varepsilon_i + \hat \varepsilon_i$.
From Lemma~\ref{lemma:tanner_graph_error}, $m(\omega_i) = 0$ 
for each cluster since $m(\omega) = 0$.
The clusters also satisfy $|\omega_i| \leq |\omega| \leq d_D-1$ as required
in the definition of $S_{in}$ and $S_{out}$.
The cluster decomposition leads to 
\begin{align} \label{eqn:bound_amplification}
\pi( \varepsilon \cap V(\omega) + \hat \varepsilon \cap V(\omega) \cap A )
%	& = \pi( x \cap ( \cup_{i \in I} V(y_i) ) + \hat x \cap ( \cup_{i \in I} V(y_i) ) \cap A ) \\
%	& = \pi \left( \sum_{i \in I} ( x_i  + \hat x_i \cap A ) \right) \\	
	& = \sum_{i \in I} \pi(\varepsilon_i + \hat \varepsilon_i \cap A) 
\end{align}
by linearity of $\pi$.
The term $\pi(\varepsilon_i + \hat \varepsilon_i \cap A)$ depends on the relative position
of the error $\omega_i$ and the truncated set $A$. 
We will establish the fault tolerance inequality for each term 
$\varepsilon_i + \hat \varepsilon_i \cap A$ by considering three cases 
as illustrated with Figure~\ref{fig:theo_proof}.

\begin{figure}
\centering
\includegraphics[scale=.5]{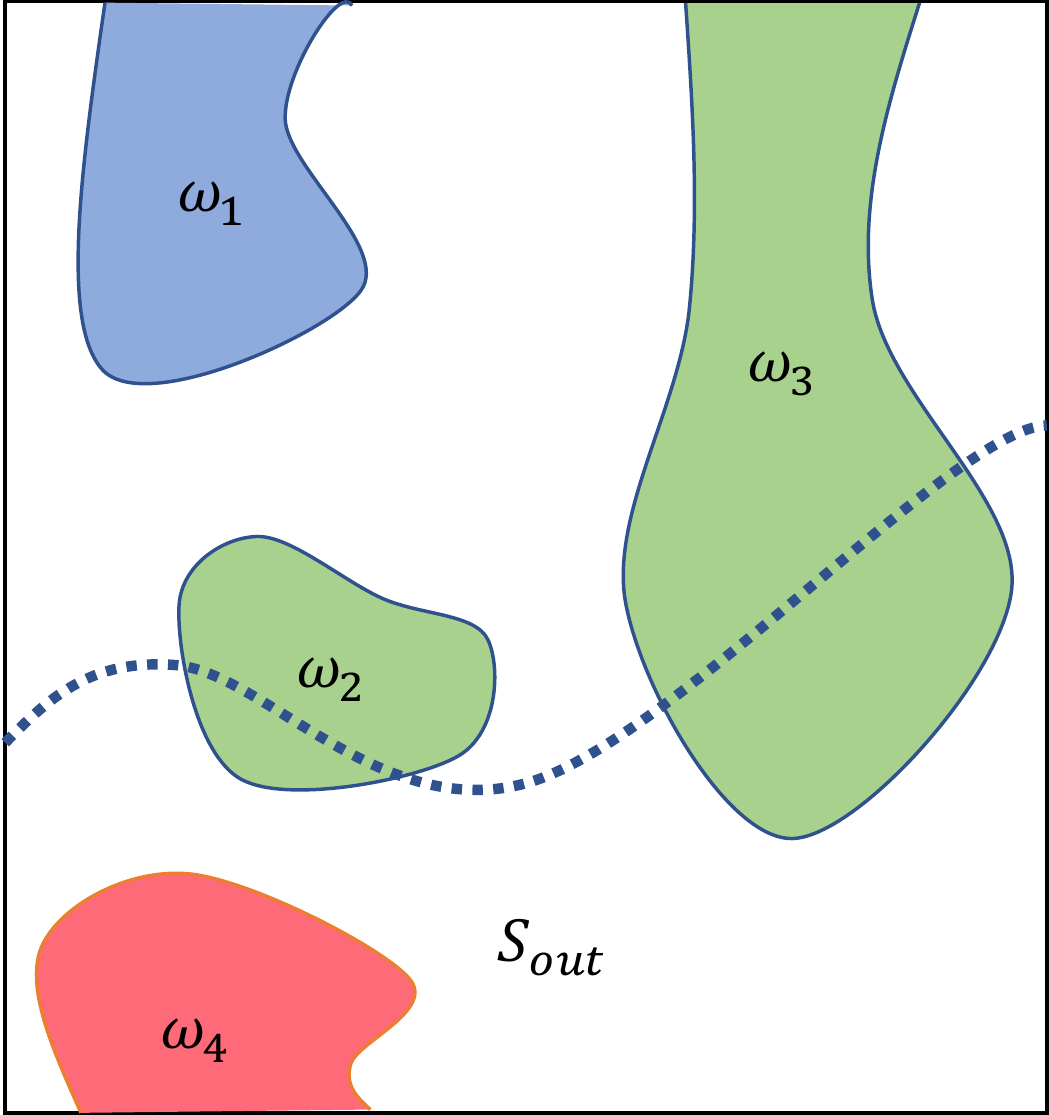}
\caption{The three types of configurations for the cluster of $\omega$ 
in Step 3 of the proof of Theorem~\ref{theo:truncated_MWE}.
The clusters of $\omega$ included in the top region like $\omega_1$
appear early enough to be corrected (case (a)).
The cluster $\omega_4$ which is fully included in $S_{out}$ and it is 
entirely truncated because it appears too late to be corrected (case (b)).
The clusters that overlap with both $S_{out}$ and its complementary
like $\omega_2$ and $\omega_3$ are partly corrected (case (c)). 
The only input error that contributes to the residual error after correction
belongs to the cluster $\omega_3$. The assumption of Theorem~\ref{theo:truncated_MWE}
guarantees that such a cluster cannot exist.
}
\label{fig:theo_proof}
\end{figure}

\begin{enumerate}
\item [(a)] Assume first that $\omega_i \subseteq A = S_{out}^C$.
Then, we have 
$\varepsilon_i + \hat \varepsilon_i \cap A = \varepsilon_i + \hat \varepsilon_i = \omega_i$. 
The accumulated error $\bar \omega_i$ cannot intersect $V_{out}$ otherwise 
it would included in $S_{out}$. Hence Lemma~\ref{lemma:tanner_graph_error} tells us that 
\begin{align}  \label{eqn:bound_amplification_case1}
\pi(\varepsilon_i + \hat \varepsilon_i \cap A ) = \pi(\omega_i) = 0 \cdot
\end{align}
\item [(b)]  Consider now the case $\omega_i \subseteq A^C = S_{out}$.
Then, we have $\varepsilon_i + \hat \varepsilon_i \cap A = \varepsilon_i$ that yields
\begin{align}  \label{eqn:bound_amplification_case2}
| \pi(\varepsilon_i + \hat \varepsilon_i \cap A ) |
= |\pi(\varepsilon_i)| \leq |\varepsilon_i| \cdot
\end{align}
\item [(c)]  The remaining clusters $\omega$ intersect both $A$ and its complementary 
$A^C$.
By definition of $S_{out}$, such an error $\omega_i$ cannot meet $V_{out}$ otherwise
it would be fully included in $A^C = S_{out}$.
One can thus apply Lemma~\ref{lemma:tanner_graph_error} showing
that $\pi(\omega_i) = \pi(\varepsilon_i + \hat \varepsilon_i) = 0$.
This leads to
\begin{align}  \label{eqn:bound_amplification_case3}
|\pi(\varepsilon_i + \hat \varepsilon_i \cap A)| = |\pi(\hat \varepsilon_i \cap A^C)| \leq |\hat \varepsilon_i| \leq |\varepsilon_i| \cdot
\end{align}
Therein, the last inequality is a consequence of Lemma~\ref{lemma:local_MWE} below.
\end{enumerate}
Denote by 
$I_{(a)} = \{ i \ | \ \omega_i \subseteq A \}$, 
$I_{(b)} = \{ i \ | \ \omega_i \subseteq A^C \}$ 
and
$I_{(c)} = I \backslash ( I_{(a)} \cup I_{(b)} )$,
the index sets corresponding to the previous three cases.
Injecting the three inequalities 
\eqref{eqn:bound_amplification_case1}, 
\eqref{eqn:bound_amplification_case2} and
\eqref{eqn:bound_amplification_case3}  
in Eq.~\eqref{eqn:bound_amplification} 
leads to
\begin{align} \label{eqn:bound_amplification2}
\pi( \varepsilon \cap V(\omega) + \hat \varepsilon \cap V(\omega) \cap A ) \leq \sum_{i \in I_{(b)} \cup I_{(c)}} |\varepsilon_i| \cdot
\end{align}

\medskip
To show Eq.~\eqref{eq:theo_proof_cases1}, it remains to prove that this sum is at most $|\varepsilon \cap V(\omega)| - |e_1^0|$.
Consider the error $\omega_{in} = \sum_{i \in I_{in}} \omega_i$ which is the sum of all
the clusters of $\omega$ that intersect with $V_{in}$.
The input error $e^0$ of $\varepsilon$ is included in the support
of $\sum_{i \in I_{in}} \varepsilon_i$.
By definition, if $i \in I_{in}$ then $\omega_i \subseteq S_{in}$.
Using the hypothesis $S_{in} \cap S_{out} = \emptyset$
this proves that $\omega_i \subseteq S_{in} \subseteq S_{out}^C = A$.
This shows that $I_{in} \subseteq I_{(a)}$ and thus $ I_{(b)} \cup  I_{(c)} \subseteq I_{in}^C$.
Coming back  to Eq.~\eqref{eqn:bound_amplification2}, we obtain
$$
\pi( \varepsilon \cap V(\omega) + \hat \varepsilon \cap V(\omega) \cap A )  \leq \sum_{i \in I_{(b)} \cup I_{(c)}} |\varepsilon_i|  \leq \sum_{i \in I_{in}^C} |\varepsilon_i| \leq |\varepsilon \cap V(\omega)| - |e_1^0|
$$
concluding the proof of Eq.~\eqref{eq:theo_proof_cases1}.
The Theorem follows.
\end{proof}

Consider an error $\varepsilon$ with outcome $m$ and let $\hat \varepsilon = D_{\MWE}^V(m)$.
The following lemma proves that a minimum weight error $\hat \varepsilon$ is also 
locally minimum within each cluster of $\varepsilon + \hat \varepsilon$. 
\begin{lemma} \label{lemma:local_MWE}
Let $\varepsilon$ be a circuit error with outcome $m$, let $\hat \varepsilon = \tilde D_{\MWE}^V(m)$
and let $\omega = \varepsilon + \hat \varepsilon$. 
Denote by $\omega = \sum_{i} \omega_i$ the cluster decomposition of $\omega$
and let $\varepsilon_i = \varepsilon \cap V(\omega_i)$ and $\hat \varepsilon_i = \hat \varepsilon \cap V(\omega_i)$.
Then, for all $i$, we have $|\varepsilon_i| \geq |\hat \varepsilon_i|$.
\end{lemma}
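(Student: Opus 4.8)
The plan is to run an exchange (swapping) argument against the global minimality of $\hat\varepsilon$. The pivotal observation is that the outcome map $\varepsilon \mapsto m(\varepsilon)$ is $\Z_2$-linear: each outcome bit is the parity of the accumulated data error present before the corresponding measurement plus the measurement flip, so it is a linear function of $\varepsilon = (e,f)$. Consequently $m(\omega) = m(\varepsilon) + m(\hat\varepsilon) = m + m = 0$, and Lemma~\ref{lemma:tanner_graph_error} then yields $m(\omega_i) = 0$ for every cluster $\omega_i$ of the decomposition $\omega = \sum_i \omega_i$.

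Before the main step I would record the structural fact that the clusters partition the support $V(\omega)$. Since the cluster decomposition is read off from the connected components of $\bar\omega$, the sets $V(\omega_i)$ are pairwise disjoint and cover $V(\omega)$; this is precisely what makes $\varepsilon_i = \varepsilon \cap V(\omega_i)$ and $\hat\varepsilon_i = \hat\varepsilon \cap V(\omega_i)$ satisfy $\omega_i = \varepsilon_i + \hat\varepsilon_i$, and it is what lets me compute weights additively across the swap below. Now suppose, for contradiction, that some cluster $j$ has $|\varepsilon_j| < |\hat\varepsilon_j|$. I would form the competitor $\hat\varepsilon' = \hat\varepsilon + \omega_j = \hat\varepsilon + \hat\varepsilon_j + \varepsilon_j$, which replaces the piece $\hat\varepsilon_j$ of $\hat\varepsilon$ living on $V(\omega_j)$ by the piece $\varepsilon_j$. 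Because $V(\hat\varepsilon_j) = V(\hat\varepsilon) \cap V(\omega_j)$, adding $\hat\varepsilon_j$ deletes exactly the support of $\hat\varepsilon$ inside $V(\omega_j)$, leaving support $V(\hat\varepsilon) \setminus V(\omega_j)$ of weight $|\hat\varepsilon| - |\hat\varepsilon_j|$; and since $V(\varepsilon_j) \subseteq V(\omega_j)$ is disjoint from that remainder, the two contributions do not overlap, giving $|\hat\varepsilon'| = |\hat\varepsilon| - |\hat\varepsilon_j| + |\varepsilon_j| < |\hat\varepsilon|$. Meanwhile $m(\hat\varepsilon') = m(\hat\varepsilon) + m(\omega_j) = m + 0 = m$, so $\hat\varepsilon'$ is a circuit error with the observed outcome but strictly smaller weight than $\hat\varepsilon = \tilde D_{\MWE}^V(m)$, contradicting minimality. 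Hence $|\varepsilon_i| \geq |\hat\varepsilon_i|$ for every $i$.

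The only genuinely delicate point is the weight bookkeeping in the swap, which rests entirely on the disjointness of the cluster supports and on the two membership facts $V(\varepsilon_j) \subseteq V(\omega_j)$ and $V(\hat\varepsilon_j) = V(\hat\varepsilon) \cap V(\omega_j)$; once these are spelled out, the rest is linear algebra over $\Z_2$ together with the minimality of $\hat\varepsilon$. I would therefore state the partition property of the cluster decomposition explicitly at the outset, so that the additivity $|\hat\varepsilon + \omega_j| = |\hat\varepsilon| - |\hat\varepsilon_j| + |\varepsilon_j|$ is beyond doubt before the exchange argument is invoked.
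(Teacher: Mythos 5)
Your proposal is correct and is essentially the paper's own argument: the same exchange/swap against the global minimality of $\hat\varepsilon$, replacing $\hat\varepsilon_j$ by $\varepsilon_j$ on the cluster $V(\omega_j)$ and using $m(\omega_j)=0$ (via Lemma~\ref{lemma:tanner_graph_error}) to see the outcome is unchanged. You merely spell out the disjointness of cluster supports and the weight bookkeeping more explicitly than the paper does.
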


\begin{proof}
If there exists a cluster $i$ such that $|\varepsilon_i| < |\hat \varepsilon_i|$ then replacing 
$\hat \varepsilon_i$ by $\varepsilon_i$ in $\hat \varepsilon$
provides an error $\hat \varepsilon' = \hat \varepsilon + \varepsilon_i + \hat \varepsilon_i$ 
with reduced weight and unchanged outcome
$m(\hat \varepsilon') = m(\hat \varepsilon)$. This last equality is a based on the fact that 
$m(\varepsilon_i + \hat \varepsilon_i) = m(\omega_i) = 0$
proven in Lemma~\ref{lemma:tanner_graph_error}.
This cannot happen by definition of the MWE decoder. 
\end{proof}

\section{Time overhead of fault tolerance} \label{sec:BOUND}

The choice of the encoding scheme is driven by the application considered. 
The application dictates the number of data bits $k$ that we need to 
encode and the error rate targeted 
is used to estimate the minimum distance $d$ required.
Encoding increases the volume of the data. 
The {\em space overhead} is the inverse of the rate of the code used, 
{\em i.e.} roughly we need $1/R$ bits per data bit. 
The {\em time overhead} to implement a fault-tolerant error correction scheme
is the number of parity check measurements per correction cycle. 
Fault tolerance may considerably increase the number of measurements 
needed to perform error correction with a code of length $n_D$. 
In this section, we obtain an upper bound on number of measurements 
required to guarantee fault tolerance by analyzing the circuit distance
of random measurement matrices.

\medskip
The following theorem demonstrates the existence of short length fault-tolerant 
sequences for general families of codes.
By a fault-tolerant sequence, we mean a sequence of parity check measurements that 
makes the data code fault-tolerant using the truncated MWE decoder.

\begin{theo} \label{theo:time_overhead}
Let $C_D$ be a family of data codes with minimum distance $d_D$
and length $n_D$.
\begin{itemize}
\item {\bf Polylog distance:} Suppose that $d_D \geq A \cdot \log(n_D)^\alpha$ 
for some constants $A, \alpha > 0$.
There exists a constant $d_0$ such that if $d_D \geq d_0$
the code $C_D$ admits a fault-tolerant measurement sequence with length $n_M = O(d_D^{1 + 1/\alpha})$.
\item {\bf Polynomial distance:} Suppose that $d_D \geq A  \cdot n_D^\alpha$ 
for some constants $A, \alpha > 0$.
There exists a constant $d_0$ such that if $d_D \geq d_0$
the code $C_D$ admits a fault-tolerant measurement sequence with length $n_M = O(d_D \log(n_D))$.
\end{itemize}
\end{theo}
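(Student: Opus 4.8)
The plan is to reduce the whole statement to a single probabilistic estimate: the probability that a random measurement matrix fails to achieve the optimal circuit distance $d_{circ} = d_D$. By Theorem~\ref{theo:truncated_MWE}, the condition $(V_{in} \cup S_{in}) \cap S_{out} = \emptyset$ guarantees that the truncated MWE decoder is fault-tolerant, and the remark following that theorem identifies $d_{circ} = d_D$ with the weaker condition $V_{in} \cap S_{out} = \emptyset$. So my first step is to show that it suffices to build a measurement code $C_M$ for which $d_{circ} = d_D$; I expect the full separation condition to follow once the circuit distance is optimal and $n_M$ is large enough that early and late regions do not overlap, so I would first isolate the analysis of $d_{circ}$ and treat the $S_{in}/S_{out}$ separation as a consequence of choosing $n_M$ sufficiently large relative to $d_D$.

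The core of the argument is a first-moment (union bound) estimate over random measurement matrices. A propagating error has weight at least $d_{circ}$, so I would sample $G_M$ with i.i.d.\ uniform entries and bound the expected number of low-weight propagating errors. Recall from Section~\ref{subsec:FTDECODING:circuit_distance} that a propagating error is a circuit error $\varepsilon$ with $m(\varepsilon)=0$ whose accumulated error connects $V_{in}$ to $V_{out}$. The plan is to enumerate candidate circuit errors of weight $w < d_D$, and for each one bound the probability (over the random choice of $G_M$) that it yields a trivial outcome, i.e.\ that $e^0 H_m^T = f$ where $H_m = G_M^T H_D$. For a fixed nonzero data pattern, each measurement row independently has probability $1/2$ of giving the ``correct'' matching bit, so the probability that all relevant constraints are satisfied decays like $2^{-(n_M - w)}$ or similar, once I account for the $w$ coordinates where a measurement error $f$ can freely repair a mismatch. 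Summing over the $\binom{(n_D+1)(n_M+1)}{w}$ possible supports of weight $w \le d_D - 1$ gives the expected count, and I would choose $n_M$ so that this expectation is strictly below $1$, which by the probabilistic method yields a matrix with $d_{circ} = d_D$.

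The parameter balancing is where the two regimes diverge. The number of candidate errors grows like $\big((n_D+1)(n_M+1)\big)^{d_D}$, contributing roughly $d_D \log(n_D n_M)$ to the log-count, while the probability factor contributes roughly $-(n_M - d_D)$. Requiring the log-expectation to be negative gives the condition $n_M \gtrsim d_D \log(n_D n_M)$. In the polynomial regime $d_D \ge A n_D^\alpha$, I have $\log n_D \le \tfrac{1}{\alpha}\log(d_D/A) = O(\log d_D)$, and since $\log n_M$ is subdominant once $n_M = O(d_D \log n_D)$, the recursion closes to give $n_M = O(d_D \log n_D)$, matching the polynomial bound. In the polylog regime $d_D \ge A \log(n_D)^\alpha$, I instead have $\log n_D = O\big(d_D^{1/\alpha}\big)$, so the dominant term is $d_D \cdot d_D^{1/\alpha} = d_D^{1 + 1/\alpha}$, yielding $n_M = O\big(d_D^{1+1/\alpha}\big)$. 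The constant $d_0$ absorbs the requirement that the leading term genuinely dominates the lower-order $\log n_M$ contributions.

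I expect the main obstacle to be the probability estimate itself, specifically controlling the correlations introduced by the measurement-error coordinates $f$ and by the accumulated (rather than instantaneous) structure of a propagating error. A naive bound treats each row of $G_M$ independently, but the constraint that the accumulated error connect $V_{in}$ to $V_{out}$ couples the internal-error coordinates across levels, so I would need to carefully argue that conditioning on a fixed support still leaves enough independent randomness in $G_M$ to force the $2^{-\Theta(n_M)}$ decay. The secondary subtlety is verifying that $d_{circ} = d_D$ upgrades to the full separation $(V_{in}\cup S_{in})\cap S_{out}=\emptyset$; I would handle this by bounding the ``reach'' of weight-$(d_D-1)$ connected clusters through the time direction and showing that, for $n_M$ exceeding a constant multiple of $d_D$, a cluster touching $V_{in}$ cannot also reach $V_{out}$, so $S_{in}$ and $S_{out}$ live in disjoint temporal bands.
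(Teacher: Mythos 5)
Your overall strategy is exactly the paper's: sample the measurement matrix at random, run a first-moment/union bound over circuit errors of weight at most $d_D-1$, show each fixed candidate is a propagating error with probability about $2^{-n_M}$, and balance $n_M$ against the $\binom{N}{d_D}$ entropy term in the two distance regimes (your parameter balancing in both the polylog and polynomial cases matches the paper's). The ``main obstacle'' you flag --- correlations coming from the accumulated structure and the $f$-coordinates --- is resolved by the observation you almost state: for a propagating error $(e,f)$ of weight $<d_D$, the outcome condition forces $m(e,0)=f$, and every accumulated data error $\bar e^{\,i}$ is nonzero of weight $<d_D$ and hence \emph{not a codeword of $C_D$}; since each row of $H_m$ is an independent uniform element of $C_D^\perp$, each bit of $m(e,0)$ is uniform and $\Prob\bigl(m(e,0)=f\bigr)=2^{-n_M}$ for each fixed $(e,f)$. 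Note that ``nonzero data pattern'' alone is not enough --- a nonzero codeword of $C_D$ would have deterministic syndrome $0$ --- it is the weight bound $<d_D$ that does the work. This is the content of the paper's Lemma~\ref{lemma:random_code_lemma1}.

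The genuine gap is in your final step, upgrading $d_{circ}=d_D$ to the hypothesis $(V_{in}\cup S_{in})\cap S_{out}=\emptyset$ of Theorem~\ref{theo:truncated_MWE}. Your proposed mechanism --- bounding the temporal ``reach'' of weight-$(d_D-1)$ connected clusters and concluding that for $n_M$ a constant multiple of $d_D$ the sets $S_{in}$ and $S_{out}$ live in disjoint temporal bands --- fails, because the temporal span of a low-weight undetected cluster is not controlled by its weight. A single input bit flip has accumulated support running through \emph{every} level of the sequential Tanner graph; how long it remains undetected depends on which rows of $H_m$ have nonzero inner product with it, not on its weight, so no $O(d_D)$ bound on the reach is available. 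The paper's fix is a doubling trick: once the random construction yields a sequence of length $n_M$ with no propagating error of weight $<d_D$, repeat that sequence twice. A weight-$<d_D$ undetected connected cluster touching $V_{in}$ must then die before the midpoint (otherwise it would be a propagating error for the first copy), and one whose accumulated error reaches $V_{out}$ must start after the midpoint (by the same property of the second copy), which gives the required separation at the cost of only a factor of $2$ in $n_M$, preserving the stated asymptotics.
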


In term of circuit distance, we prove that there exists a family of measurement 
codes with length $n_M$ that produces an optimal circuit distance 
$d_{circ} = d_D$ for the data code $C_D$.

\medskip
Naturally, one can trade time for space. In this context, this can be done by encoding
our $k$ data bits with a longer code $C_D$ with same minimum distance $d_D$. 
This extra cost in space can be compensated with a shorter fault-tolerant
measurement sequence. 
If the distance $d_D$ grows linearly with $n_D$, then the theorem
provides a fault-tolerant sequence of $\Omega(n_D \log(n_D))$ measurements.
However, using a code with minimum distance $d_D = \Omega(n_D^{\beta})$ 
for some $0 < \beta < 1$, only $\Omega(n_D^{\beta} \log(n_D))$ parity check measurements
suffice for fault-tolerance.

\begin{proof}
The basic idea is to build a family of measurement codes $C_M$ that maximizes 
the circuit distance of the pair $(C_D, C_M)$.

In order to guarantee an optimal circuit distance, we must prove that
it is possible to construct a measurement matrix $H_m$ such that
there is no circuit error with weight $w \leq d_D-1$ that is a propagating
error.
We will use the probabilistic method.
Fix the code $C_D$ and pick a random measurement matrix $H_m = G_M^T H_D$
whose rows are $n_M$ vectors of $C_D^\perp$ selected independently
according to a uniform distribution.

For a circuit error $\varepsilon \in \Z_2^N$, define the random variable 
$
X_{\varepsilon}
$
by
$$
X_{\varepsilon}(H_m) = 
\begin{cases}
1 \text{ if } \varepsilon \text{ is a propagating error for } H_m\\
0 \text{ otherwise}
\end{cases}
$$
Note that $N = (n_M + 1) (n_D + 1) - 1$.
Then, for $\rho \in \N$ denote
$$
X_{\rho} = \sum_{\substack{ \varepsilon \in \Z_2^N \\ |\varepsilon| \leq \rho }} X_\varepsilon
$$
the random variable that counts the number of propagating errors with 
weight up to $\rho$ for the code $C_D$ with the measurement matrix $H_m$.
In what follows, $\rho = d_D - 1$ and our goal is to bound the expectation of $X_\rho$.

By definition, the expectation of $X_{\varepsilon}$ is the probability that 
$\varepsilon = (e, f)$ is a propagating error.
Based on Lemma~\ref{lemma:random_code_lemma1} below, this probability is 
upper bounded by the probability that $m(e, 0) = f$.
First, let us prove that the vector $m(e, 0)$ is a uniformly random bit string 
of $\Z_2^{n_M}$. 
For all $i=1, \dots, n_M$, the $i$ th component $m_i$ of $m(e, 0)$ is the inner 
product between row $i$ of $H_m$ and the component $\bar e^{(i-1)}$
of the accumulated error.
Moreover, Lemma~\ref{lemma:random_code_lemma1} shows that 
$\bar e^{(i-1)} \notin C_D$, which proves that $m_i$ is a uniform random bit.
Given that rows of $H_m$ are selected independently, for any
circuit error $x$ with weight $|x| < d_D$, the vector $m(e, 0)$
is uniformly distributed in $\Z_2^{n_M}$.
This produces the upper bound
$$
\Esp(X_\varepsilon) = \Prob( \varepsilon \text{ is a propagating error} ) 
\leq \Prob( m(e, 0) = f ) = 2^{-n_M} 
$$
where the last equality is based on the uniformity of $m(e, 0)$.

Linearity of the expectation, combined with the upper bound on $\Esp(X_\varepsilon)$ 
leads to
$$
\Esp(X_\rho) 
= \sum_{\substack{ \varepsilon \in \Z_2^N \\ |\varepsilon| \leq \rho }} \Esp(X_\varepsilon) 
\leq \rho \cdot \binom{N}{\rho} 2^{-n_M}
\leq d_D \cdot \binom{n_D n_M}{d_D} 2^{-n_M}
$$
where $N = (n_D+1)(n_M+1)-1$ and $\rho = d_D-1$.

Applying Lemma~\ref{lemma:random_code_lemma2}, we get
\begin{align} \label{eq:X_rho_bound}
\Esp(X_\rho) 
	& \leq d_D \cdot 2^{d_D \left( \log_2( e \cdot n_D \cdot  n_M / d_D ) \right) - n_M}
\end{align}

Consider first the polylog distance case. 
We have 
$d_D \geq A \log(n_D)^\alpha$,
or equivalently 
$e^{ B d_D^{1/\alpha}} \geq  n_D$
for some constant $B$.
%\begin{align*}
%d_D \geq C \log(n_D)^\alpha
%	& \Leftrightarrow \frac{1}{C}  d_D^{1/\alpha} \geq  \log(n_D) \\
%	& \Leftrightarrow e^{ B d_D^{\beta}} \geq  n_D \\
%\end{align*}
For a sequence with length $n_M$, this leads to the following exponent in Eq.~\eqref{eq:X_rho_bound}:
\begin{align*}
& d_D \log_2( e \cdot n_D \cdot n_M / d_D ) - n_M \\
& \leq d_D \log_2( e^{ B d_D^{1/\alpha}}) + d_D \log_2( e \cdot n_M / d_D ) - n_M \\
& =C d_D^{1+1/\alpha} +  d_D \log_2( e \cdot n_M / d_D ) - n_M
\end{align*}
for some constant $C$.
One can select a sequence length $n_M = O(d_D^{1+1/\alpha})$
such that this exponent goes to $- \infty$ and therefore $\Esp(X_\rho) \rightarrow 0$
when $d_D \rightarrow +\infty$.

Consider now the polynomial distance case: 
$d_D \geq A n_D^\alpha$,
which means 
$B d_D^{1/\alpha} \geq n_D$ for some constant $B$.
The resulting exponent in Eq.~\eqref{eq:X_rho_bound} is
\begin{align*}
& d_D \log_2( e \cdot n_D \cdot n_M / d_D ) - n_M \\
& \leq d_D \log_2( B d_D^{1/\alpha} ) + d_D \log_2( e \cdot n_M / d_D ) - n_M \\
& = d_D \log_2(B) + C d_D \log_2(d_D) +  d_D \log_2(e) + d_D \log_2( n_M / d_D ) - n_M
\end{align*}
for some constant $C$.
Consider a sequence length $n_M = A' d_D \log_2(d_D)$ for some constant $A'$
such that $A' > \max(1, C)$.
The last term $n_M$ dominates the terms $d_D \log_2(B), d_D \log_2(e)$ and 
$C d_D \log_2(d_D)$.
It remains the term
$
d_D \log_2( n_M / d_D )  
%= d_D \log_2(  A' d_D \log(d_D)) / d_D ) 
= d_D \log_2( A' ) + d_D \log_2( d_D ) 
$
which is also dominated by $n_m$.
Again, this proves that for $n_M = O(d_D \log(d_D))$, the sequence 
$\Esp(X_\rho)$ goes to $0$ when $d_D \rightarrow +\infty$.

In both cases (polylog and polynomial distance), we showed that
$\Esp(X_\rho)$ goes to $0$.
Since $X_\rho$ takes integer values, this is enough to prove 
the existence of a measurement code family such that $X_{\rho} = 0$
for all sufficiently large $d_D$.
By definition of $X_\rho$, this family has an optimal circuit distance 
for the data codes $C_D$.

To conclude, we apply Theorem~\ref{theo:truncated_MWE}. 
In general, the condition $( V_{in} \cup S_{in} ) \cap S_{out} = \emptyset$,
required to apply the theorem, is not satisfied. But it is sufficient to repeat twice
the measurement sequence to guarantee this condition.
This concludes the proof.
\end{proof}

\begin{lemma} \label{lemma:random_code_lemma1}
If $\varepsilon = (e, f)$ is a propagating error with weight 
$|\varepsilon| < d_D$ then the accumulated error $\bar e_i$
introduced in Eq.~\eqref{eqn:accumulated_error} is not a codeword of $C_D$
and $m(e, 0) = f$.
%\begin{itemize}
%\item[(i)] For all $i=0, 1, \dots, n_M$, we have $\bar e^{i} \notin C_D$.
%\item[(ii)] $m(e, 0) = f$
%\end{itemize}
\end{lemma}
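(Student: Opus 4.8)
The statement splits into two independent claims, and the plan is to dispose of the easy one first. The identity $m(e,0) = f$ is essentially a restatement of the defining property of a propagating error: by definition $m(\varepsilon) = 0$, and since the measurement error merely flips outcome bits, the outcome decomposes as $m(\varepsilon) = m(e,0) + f$. Setting this equal to zero yields $m(e,0) = f$ with no further work. The substance lies in the claim that each accumulated error $\bar e^i$ is not a codeword. Here I would first record the weight bound $|\bar e^i| \leq \sum_{j=0}^{i} |e^j| \leq |e| \leq |\varepsilon| < d_D$, valid for every level $i$. Since the only codeword of $C_D$ with weight strictly below $d_D$ is the zero word, the entire claim reduces to showing $\bar e^i \neq 0$ for every relevant level.

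To establish $\bar e^i \neq 0$, I would exploit the defining feature of a propagating error: $V(\bar\varepsilon)$ contains a path $P$ joining a vertex of $V_{in}$ (level $0$) to a vertex of $V_{out}$ (level $n_M$). The key tool is a discrete intermediate-value argument on a level function of the sequential Tanner graph. Assign each data vertex $v_{i,j}$ the level $i$, and each outcome vertex $u_{i+1}$ the level $i$ as well, which is consistent because its only neighbors, via the type-(iii) edges, are data vertices at level $i$. Every edge of the graph then changes the level by at most one: the type-(i) vertical edges by exactly one, and the type-(ii) and type-(iii) edges by zero. Consequently the levels along $P$ form a walk on $\{0,\dots,n_M\}$ with steps in $\{-1,0,+1\}$ that starts at $0$ and ends at $n_M$, so it must visit every intermediate integer level.

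It then remains to convert ``visits every level'' into $\bar e^i \neq 0$. At each level $i$ the path contains a vertex of level $i$; if this is a data vertex $v_{i,j}$, then $\bar e^i_j = 1$ directly, whereas if it is an outcome vertex $u_{i+1}$, its neighbors on $P$ are necessarily data vertices of level $i$, again witnessing $\bar e^i \neq 0$. Combined with the weight bound, this gives $\bar e^i \notin C_D$ for all $i$, covering in particular the levels $0,\dots,n_M-1$ invoked in the proof of Theorem~\ref{theo:time_overhead}. The main obstacle I anticipate is making the level argument watertight around the outcome vertices: I must verify that an outcome vertex cannot be an endpoint of $P$ (the endpoints lie in $V_{in} \cup V_{out}$, which are data vertices), so that it always has two data-vertex neighbors on the path and hence genuinely forces a nonzero data component at its level. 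Everything else is a one-line weight estimate together with the minimum-distance property of $C_D$.
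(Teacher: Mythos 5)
Your proof is correct and follows the same route as the paper: the identity $m(e,0)=f$ falls out of $m(\varepsilon)=0$, and the non-codeword claim reduces via the weight bound $|\bar e^i|\leq|\varepsilon|<d_D$ to showing $\bar e^i\neq 0$. The only difference is that the paper asserts $\bar e^i\neq 0$ directly ``by definition of a propagating error,'' whereas you supply the careful discrete intermediate-value argument on levels along the path from $V_{in}$ to $V_{out}$ that justifies it.
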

Property (i) of Lemma~\ref{lemma:random_code_lemma1} is independent of 
the codes $C_D$ and $C_M$.
However, the value of $m(e, 0)$ used in (ii) depends on these codes. 

\begin{proof}
By definition of a propagating error, we have 
$\bar e^{i} \neq 0$ for all $i$ and the condition $|\varepsilon| < d_D$ implies
that $\bar e^{i}$ cannot belong to $C_D$. This proves item (i).
The second property is an immediate consequence of the property 
$m(\varepsilon) = 0$
\end{proof}

The proof of Theorem~\ref{theo:time_overhead} relies on the following 
standard bound on combinatorial factors.
\begin{lemma} \label{lemma:random_code_lemma2}
For all integers $m, n$ such that $1 \leq m \leq  n$, we have
$$
\binom{n}{m} \leq \left( \frac{ne}{m} \right)^m  = 2^{m \log_2( n/m ) + m \log_2(e)} \cdot
$$
\end{lemma}

\begin{proof}
It is an immediate application of the bound $m! \geq \left( \frac{m}{e} \right)^m$
\end{proof}

\section{Numerical results} \label{sec:numerics}

This section illustrates our results with numerical simulations. 
As proven in Proposition~\ref{prop:ave_life_time_increase}, 
we observe an increase of the lifetime of encoded data,
corrected regularly using the truncated MWE decoder,
when the initial physical noise rate is sufficiently low.
Then, we analyze the importance of different types of noise by varying the 
relative probabilities of input errors, internal errors and measurement errors,
proving that internal errors are the most harmful.

\medskip
Given a data code $C_D$, we select a measurement matrix with optimal
circuit distance. We pick a length $n_M$ as small as possible.
The truncated MWE decoder is used for fault-tolerant error correction. 
We implement this decoding algorithm as a lookup table.
This strategy applies to a restricted set of codes since the amount of memory required 
to store the table grows exponentially with the code length. 
The main advantage of this approach is the rapidity of the decoding that 
returns the correction to apply in constant time.

\begin{figure}
\centering
\includegraphics[scale=.48]{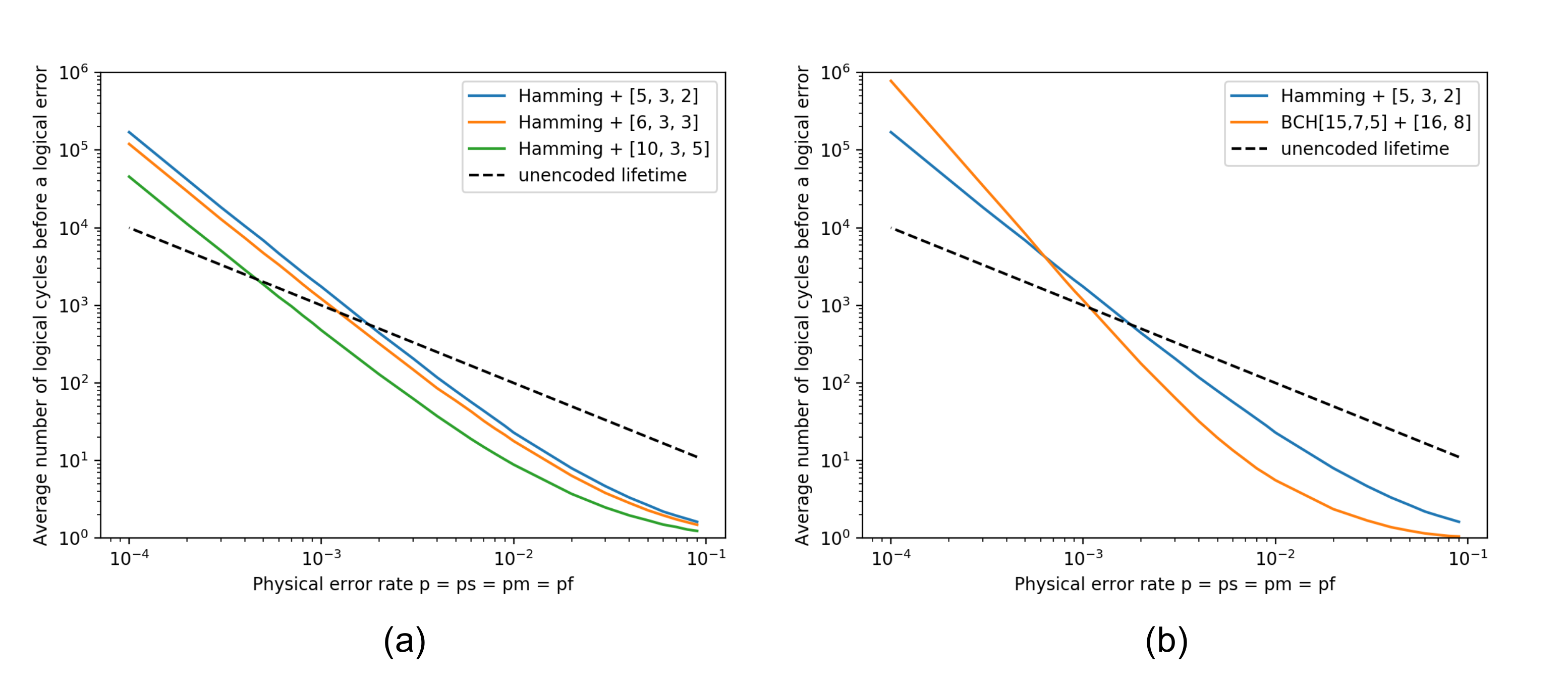}
\caption{(a) Average lifetime of the Hamming code with different measurement codes
for a uniform noise with parameters $p = p_s = p_m = p_f$.
In this noise regime, the shortest measurement sequence with five measurements gives the best results.
(b) Comparison between the Hamming code and a distance-five BCH code. 
The larger minimum distance of the BCH code leads to a more favorable scaling of 
the encoded lifetime.
}
\label{fig:life_time}
\end{figure}

\medskip
%In order to illustrate the performance of the fault-tolerant decoder, 
Figure~\ref{fig:life_time} plots the average lifetime obtained by numerical simulations.
We assume that we perform cycles of measurement and error correction 
at regular intervals. Between two such error correction cycles, the stored data is affected
by independent by flips with probability $p_s$. We refer to $p_s$ as the storage 
error rate.
During a correction cycle, each parity check measurement may flip the measured bits.
We assume that the noise on the bits that are not involved in the parity check is negligible.
Measured bits are affected by independent bit flip with probability $p_m$.
During a full measurement cycle, a bit involved in $r$ parity checks suffers from
an error rate that is roughly $r p_m$.
Each outcome bit is flipped independently with probability $p_f$.
We estimate the lifetime of encoded data by compute the average lifetime  
over 10000 trials.
When the physical error rate is small enough the lifetime of the encoded data surpasses 
the unencoded lifetime. 
In the case of a uniform noise $p_s = p_f = p_m$, this happens for 
$$
p_{th} \approx 1.1 \cdot 10^{-3}
$$
for the Hamming code combined with the linear code $[5, 3, 2]$.
When the error rate $p$ is below the threshold value $p_{th}$, often called 
{\em pseudo-threshold} \cite{svore2005:pseudothreshold}, it becomes advantageous to encode. For a uniform noise 
a smaller number of measurements, that is smaller length for the measurement code 
is preferable. A larger minimum distance $d_D$ brings a greater improvement 
of the average lifetime below the pseudo-threshold but it generally also 
degrades the value of the pseudo-threshold of the scheme.

\medskip
The average lifetime and the pseudo-threshold of a fault-tolerant error correction 
scheme depends on the three parameters $p_s, p_f, p_m$ of the storage noise model.
Figure~\ref{fig:life_time_multivariate} shows that the parameter $p_m$ has a greater 
influence on the performance of the scheme than the flip error rate $p_f$.
An internal bit flip is more likely to cause a logical error than a flipped outcome.
This is because an error that affects only the measurement outcome leads to 
introducing an error $D_{\MWE}(m)$ in the data and by construction of the 
decoder this error is chosen to have low weight.
Through this process flipped outcomes are converted into low-weight residual 
errors that can be corrected by the next error correction cycle.
This is true even when many outcomes is flipped. 
This phenomenon makes outcome flips far easier to correct than bit flips corrupting
the data.

\begin{figure}
\centering
\includegraphics[scale=.5]{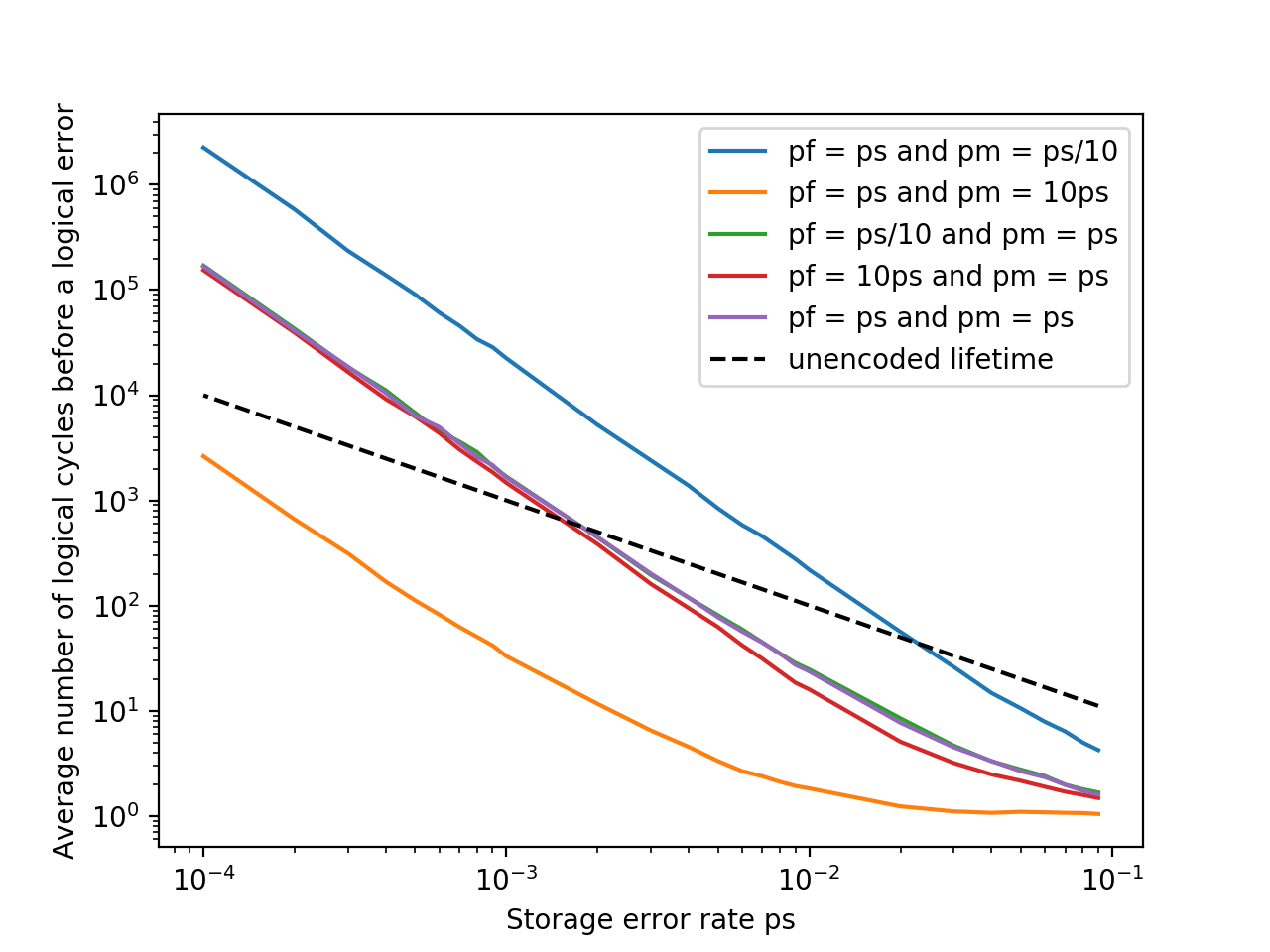}
\caption{Average lifetime of encoded data with the Hamming code combined
with the [5,3,2] measurement code for different noise parameters
obtained over 10,000 trials.
We vary the strength of the measurement noise $p_f$ and the internal noise $p_m$.
Increasing or decreasing the value of  $p_f$ keeps the lifetime roughly unchanged.
However, the encoded lifetime is very sensitive to the internal noise $p_m$.
}
\label{fig:life_time_multivariate}
\end{figure}

\section{Conclusion}

We have introduced a simple formalism for fault-tolerant error correction with 
linear codes. Based on a notion of minimum distance and a decoding 
strategy adapted to this model and we obtain bounds on the time-overhead
for fault tolerance. Our work suggests further extensions.
\begin{itemize}
\item {\em Efficient fault-tolerant decoding algorithm:}
We chose to implement the truncated MWE decoder as via a lookup table which is extremely fast
but the large amount of memory required is an important drawback of this approach,
restricting our scheme to short-length codes. 
It is unclear whether a general efficient implementation of the truncated MWE decoder 
exists.
However, an efficient decoder can be designed for specific families of measurement matrices.
This would significantly extend the scope of the current fault-tolerant error correction scheme.

\item {\em Most Likely Coset decoder:} We observed in Section~\ref{subsec:FTDECODING:circuit_distance}
that some circuit errors are trivial. That means that cosets of circuit errors are indistinguishable.
Identifying the most likely coset instead of the most likely circuit error would lead 
to a better decoder.
In the quantum setting, the equivalence between two errors that differ in a stabilizer should
also be considered. This is another notion of coset that should be exploited in an ideal decoder \cite{dennis2002:tqm, bacon2017:QLDPC_from_circuit, pryadko2019MLD_circuit}.

\item {\em Optimal time overhead:}
Theorem~\ref{theo:time_overhead} provides an upper bound
$O(d_D \log(d_D))$
on the number of measurements required for fault-tolerant error correction
with codes with polynomial distance.
We conjecture that the time overhead can be reduced further with fault-tolerance
sequences with length $O(d_D)$ (Conjecture~\ref{conjecture}).

\item {\em Space-time tradeoff:}
In order to reduce the time overhead, it may be advantageous to encode data 
in a code with large length $n_D$ and suboptimal minimum distance $d_D$.
We leave the study of the tradeoff between the space overhead $n_D / k_D$
and the time overhead $n_M / k_D$ for future research.

\item {\em Beyond sequential measurements:}
We use the number of measurements as a proxy to measure the time-overhead 
for fault-tolerance. This is a good estimate for the number of time-steps required
for a correction round only if most measurement cannot be implemented simultaneously
which is the case for general unstructured codes or code with dense parity-check
matrices.
%Our approach provides a great framework for optimizing small codes, however one
%may consider other cost functions for large length codes.
Using quantum LDPC codes \cite{mackay2004:QLDPC, tillich2013:QLDPC}, 
defined by sparse parity-check matrices, 
a measurement round can be implemented in constant depth while preserving 
a low residual noise after correction \cite{kovalev2013:qldpc, gottesman2014:ldpc, leverrier2015:quantum_expander_codes, fawzi2018:constant_overhead}.
The main reason for this difference is that internal errors only affect $O(1)$ outcome bits.
Consequently, one can ignore internal error and replace them
by low weight correlated measurement errors.
This relates to the notion of single-shot quantum error correction \cite{bombin2015:single_shot, campbell2019:single_shot}
extensively used for topological quantum codes 
\cite{brown2016:FT_guauge_color_codes, breuckmann2016:2D_4D_decoders, kubica2018:CA_decoder, zeng2019:conv_DS_codes}.
In the case of quantum LDPC codes, one may consider minimizing the 
total number of measurements implemented in addition to the measurement 
sequence length.
\end{itemize}

This work is motivated by quantum computing applications, where
quantum error correction must be implemented via hardware 
suffering from high noise rate. 
This work may also find applications in other settings for error correction 
in very noisy environment, for example in flash storage, where the noise rate of 
the densest flash cells reaches $10^{-4}$ or more \cite{cai2017:flash}.

\section*{Acknowledgments}

The authors would like to thank Michael Beverland, Vadym Kliuchnikov, Dave Probert, Alan Geller and David Poulin for their invaluable feedback and Robin Kothari for suggesting an improvement of the bound obtained
in Theorem~\ref{theo:time_overhead}.

%\bibliographystyle{unsrt}
%\bibliography{bib_ftec}

\begin{thebibliography}{10}

\bibitem{calderbank1996:css}
Robert~A. Calderbank and Peter~W. Shor.
\newblock Good quantum error-correcting codes exist.
\newblock {\em Physical Review A}, 54(2):1098, 1996.

\bibitem{steane1996:css}
Andrew Steane.
\newblock Multiple-particle interference and quantum error correction.
\newblock {\em Proc. R. Soc. Lond. A}, 452(1954):2551--2577, 1996.

\bibitem{gottesman1997:stabilizer}
Daniel Gottesman.
\newblock Stabilizer codes and quantum error correction.
\newblock {\em PhD thesis, California Institute of Technology}, 1997.

\bibitem{shor1995:qec}
Peter~W. Shor.
\newblock Scheme for reducing decoherence in quantum computer memory.
\newblock {\em Physical Review A}, 52(4):R2493, 1995.

\bibitem{shor1996:ftqec}
Peter~W. Shor.
\newblock Fault-tolerant quantum computation.
\newblock In {\em 1996 IEEE 37th Annual Symposium on Foundations of Computer
  Science}, pages 56--65. IEEE, 1996.

\bibitem{aharonov2008:threshold}
Dorit Aharonov and Michael Ben-Or.
\newblock Fault-tolerant quantum computation with constant error rate.
\newblock {\em SIAM Journal on Computing}, 38(4):1207--1282, 2008.

\bibitem{aliferis2005:threshold}
Panos Aliferis, Daniel Gottesman, and John Preskill.
\newblock Quantum accuracy threshold for concatenated distance-3 codes.
\newblock {\em arXiv preprint quant-ph/0504218}, 2005.

\bibitem{gottesman2010:threshold}
Daniel Gottesman.
\newblock An introduction to quantum error correction and fault-tolerant
  quantum computation.
\newblock In {\em Quantum information science and its contributions to
  mathematics, Proceedings of Symposia in Applied Mathematics}, volume~68,
  pages 13--58, 2010.

\bibitem{fowler2012:surface}
Austin~G. Fowler, Matteo Mariantoni, and Andrew~N. Martinis, John M
  .and~Cleland.
\newblock Surface codes: Towards practical large-scale quantum computation.
\newblock {\em Physical Review A}, 86(3):032324, 2012.

\bibitem{reiher2017:nitrogen}
Markus Reiher, Nathan Wiebe, Krysta~M. Svore, Dave Wecker, and Matthias Troyer.
\newblock Elucidating reaction mechanisms on quantum computers.
\newblock {\em Proceedings of the National Academy of Sciences}, page
  201619152, 2017.

\bibitem{chao2018:flagec}
Rui Chao and Ben~W. Reichardt.
\newblock Quantum error correction with only two extra qubits.
\newblock {\em Physical Review Letters}, 121(5):050502, 2018.

\bibitem{chamberland2018:flagec}
Christopher Chamberland and Michael~E. Beverland.
\newblock Flag fault-tolerant error correction with arbitrary distance codes.
\newblock {\em Quantum}, 2:53, 2018.

\bibitem{steane1999:ftqec}
Andrew~M. Steane.
\newblock Efficient fault-tolerant quantum computing.
\newblock {\em Nature}, 399(6732):124, 1999.

\bibitem{bombin2015:single_shot}
H{\'e}ctor Bomb{\'\i}n.
\newblock Single-shot fault-tolerant quantum error correction.
\newblock {\em Physical Review X}, 5(3):031043, 2015.

\bibitem{berrou1993turbo_codes}
Claude Berrou, Alain Glavieux, and Punya Thitimajshima.
\newblock Near shannon limit error-correcting coding and decoding: Turbo-codes.
  1.
\newblock In {\em Proceedings of ICC'93-IEEE International Conference on
  Communications}, volume~2, pages 1064--1070. IEEE, 1993.

\bibitem{decreusefond1997error}
Laurent Decreusefond and Gilles Z{\'e}mor.
\newblock On the error-correcting capabilities of cycle codes of graphs.
\newblock {\em Combinatorics, Probability and Computing}, 6(1):27--38, 1997.

\bibitem{kou2001finite_geo_LDPC}
Yu~Kou, Shu Lin, and Marc~P.C. Fossorier.
\newblock Low-density parity-check codes based on finite geometries: a
  rediscovery and new results.
\newblock {\em IEEE Transactions on Information theory}, 47(7):2711--2736,
  2001.

\bibitem{arikan2009polar_codes}
Erdal Arikan.
\newblock Channel polarization: A method for constructing capacity-achieving
  codes for symmetric binary-input memoryless channels.
\newblock {\em IEEE Transactions on information Theory}, 55(7):3051--3073,
  2009.

\bibitem{fujiwara2014:syndrome_correction}
Yuichiro Fujiwara.
\newblock Ability of stabilizer quantum error correction to protect itself from
  its own imperfection.
\newblock {\em Physical Review A}, 90(6):062304, 2014.

\bibitem{ashikhmin2014:syndrome_correction}
Alexei Ashikhmin, Ching-Yi Lai, and Todd~A. Brun.
\newblock Robust quantum error syndrome extraction by classical coding.
\newblock In {\em Information Theory (ISIT), 2014 IEEE International Symposium
  on}, pages 546--550. IEEE, 2014.

\bibitem{ashikhmin2016:syndrome_correction_LDGM}
Alexei Ashikhmin, Ching-Yi Lai, and Todd~A. Brun.
\newblock Correction of data and syndrome errors by stabilizer codes.
\newblock {\em arXiv preprint arXiv:1602.01545}, 2016.

\bibitem{campbell2019:single_shot}
Earl Campbell.
\newblock A theory of single-shot error correction for adversarial noise.
\newblock {\em Quantum Science and Technology}, 2019.

\bibitem{mackay2004:QLDPC}
David~J.C. MacKay, Graeme Mitchison, and Paul~L. McFadden.
\newblock Sparse-graph codes for quantum error correction.
\newblock {\em IEEE Transactions on Information Theory}, 50(10):2315--2330,
  2004.

\bibitem{tillich2013:QLDPC}
Jean-Pierre Tillich and Gilles Z{\'e}mor.
\newblock Quantum {LDPC} codes with positive rate and minimum distance
  proportional to the square root of the blocklength.
\newblock {\em IEEE Transactions on Information Theory}, 60(2):1193--1202,
  2013.

\bibitem{macwilliams1977:ECC}
Florence~Jessie MacWilliams and Neil James~Alexander Sloane.
\newblock {\em The theory of error-correcting codes}.
\newblock Elsevier, 1977.

\bibitem{vanlint2012:coding_theory}
Jacobus~Hendricus Van~Lint.
\newblock {\em Introduction to coding theory}, volume~86.
\newblock Springer Science \& Business Media, 2012.

\bibitem{dennis2002:tqm}
Eric Dennis, Alexei Kitaev, Andrew Landahl, and John Preskill.
\newblock Topological quantum memory.
\newblock {\em Journal of Mathematical Physics}, 43(9):4452--4505, 2002.

\bibitem{kou2001:finite_geometry_LDPC}
Yu~Kou, Shu Lin, and Marc~P.C. Fossorier.
\newblock Low-density parity-check codes based on finite geometries: a
  rediscovery and new results.
\newblock {\em IEEE Transactions on Information Theory}, 47(7):2711--2736,
  2001.

\bibitem{Grassl:codetable_article}
Markus Grassl.
\newblock {Bounds on the minimum distance of linear codes and quantum codes}.
\newblock Online available at \url{http://www.codetables.de}, 2007.
\newblock Accessed on 2019-11-02.

\bibitem{Grassl06:codetable_algo}
Markus Grassl.
\newblock {Searching for linear codes with large minimum distance}.
\newblock In Wieb Bosma and John Cannon, editors, {\em Discovering Mathematics
  with Magma --- Reducing the Abstract to the Concrete}, volume~19 of {\em
  Algorithms and Computation in Mathematics}, pages 287--313. Springer,
  Heidelberg, 2006.

\bibitem{Magma}
Wieb Bosma, John Cannon, and Catherine Playoust.
\newblock {The Magma Algebra System I: The User Language}.
\newblock {\em Journal of Symbolic Computation}, 24(3-4):235--265, October
  1997.

\bibitem{Brouwer98:codetable_v0}
Andries~E. Brouwer.
\newblock {Bounds on the size of linear codes}.
\newblock In Vera~S. Pless and W.Cary Huffman, editors, {\em Handbook of Coding
  Theory}, chapter~4, pages 295--461. Elsevier, Amsterdam, 1998.

\bibitem{tanner1981:tanner_graph}
R.~Tanner.
\newblock A recursive approach to low complexity codes.
\newblock {\em IEEE Transactions on Information Theory}, 27(5):533--547, 1981.

\bibitem{richardson2001:ldpc}
Thomas~J. Richardson, Mohammad~Amin Shokrollahi, and R{\"u}diger~L. Urbanke.
\newblock Design of capacity-approaching irregular low-density parity-check
  codes.
\newblock {\em IEEE Transactions on Information Theory}, 47(2):619--637, 2001.

\bibitem{kovalev2013:qldpc}
Alexey~A. Kovalev and Leonid~P. Pryadko.
\newblock Fault tolerance of quantum low-density parity check codes with
  sublinear distance scaling.
\newblock {\em Physical Review A}, 87(2):020304, 2013.

\bibitem{gottesman2014:ldpc}
Daniel Gottesman.
\newblock Fault-tolerant quantum computation with constant overhead.
\newblock {\em Quantum Information \& Computation}, 14(15-16):1338--1372, 2014.

\bibitem{bacon2017:QLDPC_from_circuit}
Dave Bacon, Steven~T. Flammia, Aram~W Harrow, and Jonathan Shi.
\newblock Sparse quantum codes from quantum circuits.
\newblock {\em IEEE Transactions on Information Theory}, 63(4):2464--2479,
  2017.

\bibitem{svore2005:pseudothreshold}
Krysta~M. Svore, Andrew~W. Cross, Isaac~L. Chuang, and Alfred~V. Aho.
\newblock A flow-map model for analyzing pseudothresholds in fault-tolerant
  quantum computing.
\newblock {\em arXiv preprint quant-ph/0508176}, 2005.

\bibitem{pryadko2019MLD_circuit}
Leonid~P. Pryadko.
\newblock On maximum-likelihood decoding with circuit-level errors.
\newblock {\em arXiv preprint arXiv:1909.06732}, 2019.

\bibitem{leverrier2015:quantum_expander_codes}
Anthony Leverrier, Jean-Pierre Tillich, and Gilles Z{\'e}mor.
\newblock Quantum expander codes.
\newblock In {\em 2015 IEEE 56th Annual Symposium on Foundations of Computer
  Science}, pages 810--824. IEEE, 2015.

\bibitem{fawzi2018:constant_overhead}
Omar Fawzi, Antoine Grospellier, and Anthony Leverrier.
\newblock Constant overhead quantum fault-tolerance with quantum expander
  codes.
\newblock In {\em 2018 IEEE 59th Annual Symposium on Foundations of Computer
  Science (FOCS)}, pages 743--754. IEEE, 2018.

\bibitem{brown2016:FT_guauge_color_codes}
Benjamin~J. Brown, Naomi~H. Nickerson, and Dan~E. Browne.
\newblock Fault-tolerant error correction with the gauge color code.
\newblock {\em Nature Communications}, 7:12302, 2016.

\bibitem{breuckmann2016:2D_4D_decoders}
Nikolas~P. Breuckmann, Kasper Duivenvoorden, Dominik Michels, and Barbara~M.
  Terhal.
\newblock Local decoders for the 2{D} and 4{D} toric code.
\newblock {\em arXiv preprint arXiv:1609.00510}, 2016.

\bibitem{kubica2018:CA_decoder}
Aleksander Kubica and John Preskill.
\newblock Cellular-automaton decoders with provable thresholds for topological
  codes.
\newblock {\em Physical review letters}, 123(2):020501, 2019.

\bibitem{zeng2019:conv_DS_codes}
Weilei Zeng, Alexei Ashikhmin, Michael Woolls, and Leonid~P. Pryadko.
\newblock Quantum convolutional data-syndrome codes.
\newblock In {\em 2019 IEEE 20th International Workshop on Signal Processing
  Advances in Wireless Communications (SPAWC)}, pages 1--5. IEEE, 2019.

\bibitem{cai2017:flash}
Yu~Cai, Saugata Ghose, Erich~F. Haratsch, Yixin Luo, and Onur Mutlu.
\newblock Errors in flash-memory-based solid-state drives: Analysis,
  mitigation, and recovery.
\newblock {\em arXiv preprint arXiv:1711.11427}, 2017.

\end{thebibliography}

\newcommand{\SortNoop}[1]{}

\end{document}